      \theoremstyle{plain}
      \newtheorem{assumption}{Assumption}
\newtheorem{thm}{Theorem}
\newcommand\udex[1]{^{\raisebox{1.2pt}{\mbox{$\nano #1$}}}}
\def\var{\mathrm{var}}
\def\cov{\mathrm{cov}}
\def\trans{^{\mbox{\tiny{\sf T}}}}
\def\real{\mathbb R}
\def\indep{\perp \hspace{-0.22cm} \perp}
\def\bop{O_{\nano P}}
\def\nano{\scriptscriptstyle}
\def\inv{^{\nano -1}}
\newcommand\lo[1]{_{\nano #1}}
\def\tint{\textstyle\int}
\def\tsum{\textstyle\sum}
\def\one{I}
\def\two{I\!I}
\def\three{I\!I\!I}
\def\four{I\!V}
\def\logit{\mbox{logit}}
\def\spn{{\mathcal S}}
\newcommand\sam[1]{\udex { (#1)}}
\begin{document}
\baselineskip=22.5pt


\begin{center}
\hspace{-2mm}
{\large \bf  Matching Using Sufficient Dimension Reduction for Causal Inference}
\end{center}

\vspace{2mm}



\begin{center}
Wei Luo$^1$ and Yeying Zhu$^2$\\
$^1$ Department of Statistics and Computer Information Systems, Baruch College\\
$^2$ Department of Statistics and Actuarial Science, University of Waterloo\\ 
\end{center}

\begin{abstract}
To estimate casual treatment effects,  we propose  a new matching approach based on the reduced covariates obtained from sufficient dimension reduction. Compared to the original covariates and the propensity score, which are commonly used for matching in the literature, the reduced covariates are estimable nonparametrically under a mild assumption on the original covariates, and are sufficient and effective in imputing the missing potential outcomes. Under the ignorability assumption, the consistency of the proposed approach requires a weaker common support condition. In addition,  the researchers are allowed to use different reduced covariates to find matched subjects for different treatment groups. We develop relative asymptotic results, and conduct simulation studies as well as real data analysis to illustrate the usefulness of the proposed approach.

\vspace{3mm}

\noindent {\bf Key words and phrases:}
Causal inference; Central subspace; Common support condition; Dimension reduction; Matching
\end{abstract} 

\newpage

\section{Introduction}
One of the commonly used approaches to the causal analysis of observational data is matching, which is a systematic way to find comparable treated and control subjects that have identical or approximate values on an appropriate function of the covariates (\citealp{rosenbaum1983}). The causal effects are then estimated from the matched dataset as if the treatment were randomly assigned. Given a function of the covariates, multiple methods have been developed in the literature to assign matched subjects, e.g.,  nearest available matching based on propensity scores, Mahalanobis metric matching (\citealp{rosenbaum1985}), full matching \citep{rosenbaum1991, hansen2004}, and genetic matching \citep{diamond2013}. The latter two are data-adaptive  in the sense that they both optimize a certain balance criterion based on the data.

Besides the choice of the matching methods, it is also crucial to select an appropriate function of the covariates that these methods are built on, as different choices can dramatically affect the asymptotic behaviors of the resulting estimators \citep{heckman1997, abadie2006}. A natural choice of such functions is the original covariates, which has been commonly used in the literature \citep{rosenbaum1983}. In addition to its ease of use, it ensures the unbiasedness of matching in the population level, under the ignorability assumption and the strong common support condition reviewed in Section $2$ below. However, as pointed out in \citet{abadie2006}, this choice suffers from the ``curse of dimensionality", which makes matching ineffective when the dimension of the covariates is relatively large. 

To address the above-mentioned issue, lower-dimensional functions of covariates have been used for matching in the literature, among which the most popular choice is the propensity score (e.g., \citeauthor{rosenbaum1985}, \citeyear{rosenbaum1985}), defined as the probability of a subject being treated given its covariates.
Same as the original covariates, the propensity score also ensures the unbiasedness of matching under the ignorability assumption and the strong common support condition.
However, the estimation of  the propensity score commonly relies on parametric models, which is susceptible to model mis-specification.
Recently, \citet{de2011} considered the variable selection issue in causal inference, by assuming that a small subset of covariates, called the active set, is sufficient in predicting the outcomes. Using the active set for matching, 
the resulting estimator is asymptotically unbiased, and is effective in finite samples.



In all the existing matching procedures, the same function of covariates is used to find matches in different treatment groups. When the outcome regression pattern differs, which in \citet{de2011}, means varying active sets in different treatment groups, these sets will be merged into one  set for subsequent matching. The merged active set will then always contain noisy information for both treatment groups. In this case, it is conceivable that the effectiveness of matching can be enhanced, if we allow for different functions of covariates to be used for different treatment groups.

In this article, we introduce a new set of functions of the covariates with the aid of sufficient dimension reduction (SDR, \citeauthor{li1989},\citeyear{li1989}; \citeauthor{li1991},\citeyear{li1991}). Unlike the existing approaches, the functions of covariates vary  in different treatment groups to find matches. Under a mild assumption, the new functions enjoy the advantages of both the original covariates and the propensity scores. That is, they are easily estimable in a model-free manner, and are usually of low dimension in practice. In particular, their dimensions are always equal to or lower than the cardinality of the active set in \citet{de2011}.
In addition, as seen later, the new functions help relax the strong common support condition, and thus enhance the applicability of the matching methods.

The structure of this article is as follows. In Section $2$, we will briefly review the potential outcomes framework and the literature of SDR, with an emphasis on sliced inverse regression (SIR, \citeauthor{li1991}, \citeyear{li1991}). We then propose a new matching estimator using SDR in Section $3$, and develop some relative asymptotic results in Section $4$. Section $5$ includes the details in implementation. Simulation studies and real data analysis are conducted in Section $6$ and $7$ to illustrate the proposed estimator. We will leave more discussion to Section $8$.

\section{A review of literature}

\subsection{Potential Outcomes Framework}

Matching can be explained under the commonly used potential outcomes framework (\citealp{neyman1923, rubin1974}) that assumes the complete data to be $n$ copies of $(X, T, Y(0), Y(1))$, denoted as $\{(X\sam i, T\sam i, Y\sam i (0)$, $Y\sam i (1)), i=1,\ldots n\}$. Here,  $X \in \mathbb{R}\udex p$ is the set of covariates with support $\Omega (X)$; $T$ is assumed to be the binary treatment assignment with $T=0$ being the control group and $T=1$ being the treatment group; and $Y(t)  \in \real$ is the potential outcome under treatment $t$ and is observed only when $T=t$. The potential outcome framework assumes the existence of the hypothetical outcome under the treatment level that is not observed in the data. Different types of causal effects are defined depending on the research interest. For example, the average causal effect is defined as
\begin{align*}
ACE=E[Y(1)-Y(0)],
\end{align*}
and the average causal effect among the treated is defined as
\begin{align*}
ACET=E[Y(1)-Y(0)|T=1].
\end{align*}

To estimate these parameters from the observed data, researchers commonly regulate the missingness mechanism by the ignorability assumption (\citealp{rosenbaum1983}),
\begin{align}\label{assumption-ig}
Y(t) \indep T \,|\, X, \quad t = 0, 1
\end{align}
in which $\indep$ means independence. Under (\ref{assumption-ig}), \citet{rosenbaum1983} introduced a family of measurable functions of $X$ called the balancing score, denoted by $R(X)$,  which satisfies
\begin{align}\label{assumption-bal-score}
Y(t) \indep T \,|\, R (X),
\end{align}
for $t=0, 1$. Examples of $R(X)$ include the covariates $X$ and the aforementioned propensity score.

A natural interpretation of (\ref{assumption-bal-score}) is that if we find two subjects that share the same value of $R(X)$, then no matter whether they are from the same treatment group or not, their outcomes $Y(t)$ should  have identical distributions. Consequently, if we manage to match each subject with those in the alternative treatment group according to their similarity in $R(X)$, and impute its missing potential outcome using the observed outcomes in the matched set, then we can treat the imputed dataset as observed completely, and estimate the aforementioned causal effects using the usual sample moments. This strategy is used in the Abadie-Imbens estimator (\citealp{abadie2006}), which imputes the missing potential outcome of subject $i$ by
\begin{equation}\label{eq:impute}
\hat{Y} \sam i(1-T\sam i)=\frac{1}{m} \sum_{j\in J\lo m (i)} Y\sam j (T\sam j),
\end{equation}
in which $J\lo m(i)$ is the matched set for subject $i$ with cardinality $m$.
Clearly, $J\lo m (i)$ is determined by $R(X)$, and is crucial to the consistency of matching. For a fixed value of $m$, when a high-dimensional $R(X)$ is used, such as $X$, the subjects in $J\lo m (i)$ may differ dramatically in terms of $R(X)$. This may lead to a large bias in  (\ref{eq:impute}). When $R(X)$ fails to satisfy (\ref{assumption-bal-score}), which occurs if the propensity score is inconsistently estimated, then even if the subjects in $J\lo m (i)$ share the same value of $R(X)$ as subject $i$,
their outcomes can be stochastically different
and lead to a large bias again. These are the drawbacks of using the original covariates and  the estimated propensity score for matching.

To ensure the consistency of matching using $R(X)$, the data must satisfy that
$P(T = 1 | R(X)) \in (0, 1)$ almost surely, which, by Bayes' Theorem, amounts to the coincidence of the support of $R(X)$ in the treatment and control groups.
When $R(X)$ is either the covariates or the propensity score, the condition is further equivalent to the common support of the covariates across the treatment groups, which we refer to as the strong common support condition.  The condition can  be relaxed if other forms of $R(X)$ are used,  for example, if we employ SDR to construct a low-dimensional balancing score $R(X)$, other than the propensity score, for matching. We review the basics of SDR in the following subsection. 

\subsection{Sufficient dimension reduction}

SDR has been widely used to reduce the dimension of the covariates prior to data analysis. For the covariates $X$ and a response variable $W$, it assumes the existence of $\beta \in \real\udex {p\times d}$, where $d < p$, such that the reduced covariates $X'\beta$ are sufficient for the subsequent modeling of $W$. That is,
\begin{equation}\label{assumption-sdr}
W \indep X \, | \, X'\beta.
\end{equation}
For identifiable parametrization, \citet{cook1998} introduced the central subspace, denoted by $\spn \lo {W | X}$, as the column space spanned by $\beta$ in (\ref{assumption-sdr}) with the smallest dimension $d$. Conditions required for existence of this space are fairly general, and are adopted throughout the article. If $\beta$ in (\ref{assumption-sdr}) is further restricted so that each of its column must have exactly one nonzero component, then $X'\beta$ becomes the active set in variable selection. Thus, by definition, the dimension of the central subspace must be equal to or less than the cardinality of the active set. In this sense, SDR can be more effective than variable selection in reducing the dimension of the covariates, subject to consistent estimation of the central subspace.

In the literature, multiple methods have been developed to estimate the central subspace, including sliced inverse regression (SIR, \citeauthor{li1991},\citeyear{li1991}), sliced average variance estimation (SAVE, \citeauthor{cook1991},\citeyear{cook1991}) and the directional regression (\citealp{li2007}), among which SIR is commonly used for its simplicity and finite sample effectiveness. For consistency, it requires the {\it linearity condition} on $X$:
\begin{equation}\label{assumption-linearity}
E[\{X - E(X)\} | X'\beta] = \Sigma\lo X \beta (\beta '\Sigma\lo X \beta)\inv \beta' \{X - E(X)\},
\end{equation}
where $\Sigma\lo X$ is the covariance matrix of $X$. The condition includes the elliptical distribution of $X$ as a special case, and holds approximately for general high-dimensional covariates (\citealp{hall1993}). Thus, the linearity condition is not considered restrictive in applications. As no other model structure is assumed, SIR is termed model-free. More details can be found in Section $5$ and \citet{li1991}.

A separate issue in SDR is to determine the dimension of the central subspace. In the literature, the sequential testing procedure (\citealp{li1991}) has been shown effective for this purpose. For continuity of the context, we leave its detail to the Appendix A (Supplementary Material).

\section{Matching based on SDR}

From the discussions above, when the SDR assumption (\ref{assumption-sdr}) holds in the data with the treated and control outcomes to be the response variable $W$, respectively, the reduced covariates $X'\beta$ are low-dimensional, and are estimable nonparametrically had the outcomes been completely observed. Therefore, $X'\beta$ will be desired functions of covariates to be matched on if they are balancing scores in (\ref{assumption-bal-score}). This motivation is justified in the following theorem.

\begin{thm}\label{thm:sdr-bal}
For $t=0, 1$, suppose that the central subspace $\spn \lo {Y(t) | X}$ is $r(t)$-dimensional, where $r(t) < p$. Then under the ignorability assumption, its arbitrary basis matrix $\beta\lo t \in \real\udex {p \times r(t)}$ generates a balancing score $X'\beta\lo t$ that satisfies (\ref{assumption-bal-score}) for $t=0,1$.
\end{thm}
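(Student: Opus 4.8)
The plan is to obtain (\ref{assumption-bal-score}) for $R(X) = X'\beta\lo t$ from two conditional-independence inputs. The first is the ignorability assumption (\ref{assumption-ig}), $Y(t) \indep T \mid X$. The second is the defining property of the central subspace $\spn\lo{Y(t)|X}$, which is precisely (\ref{assumption-sdr}) specialized to $W = Y(t)$ and $\beta = \beta\lo t$, namely $Y(t) \indep X \mid X'\beta\lo t$. Before combining them, I would note that any two basis matrices of $\spn\lo{Y(t)|X}$ span the same column space and hence generate, up to a linear bijection, the same $\sigma$-field; the balancing-score property therefore does not depend on which basis is chosen, so arguing for an arbitrary $\beta\lo t$ loses no generality.

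The key maneuver is that $X'\beta\lo t$ is a measurable function of $X$, so conditioning on $X$ coincides with conditioning on the pair $(X, X'\beta\lo t)$; ignorability thus strengthens to $Y(t) \indep T \mid (X, X'\beta\lo t)$. I would then merge this with $Y(t) \indep X \mid X'\beta\lo t$ through the graphoid axioms of conditional independence: the contraction axiom, with $Y(t)$ in the role of the response, $X'\beta\lo t$ as the conditioning variable, and $X$ and $T$ as the two blocks, delivers $Y(t) \indep (X, T) \mid X'\beta\lo t$, after which the decomposition axiom discards $X$ and yields $Y(t) \indep T \mid X'\beta\lo t$, which is exactly (\ref{assumption-bal-score}). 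Equivalently, and without invoking the axioms by name, I could write the conditional law of $Y(t)$ given $(T, X'\beta\lo t)$ as an average over $X$ of its conditional law given $(T, X)$; ignorability erases the dependence on $T$ inside the integrand, the central-subspace relation collapses the remaining dependence on $X$ to dependence on $X'\beta\lo t$ alone, and the averaging measure then integrates to one, leaving a law that is free of $T$.

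I expect the main obstacle to be measure-theoretic bookkeeping rather than anything conceptual. The contraction step --- or, equivalently, the interchange in the averaging argument --- must be justified for possibly continuous $X$ and $Y(t)$, which relies on the existence of regular conditional distributions, already implicit in the regularity under which the central subspace is assumed to exist. The one point to handle with care is that, after the upgrade of ignorability, both input relations condition on the common variable $X'\beta\lo t$, so that contraction applies with a shared conditioning set; this alignment is precisely what makes the two independence statements lock together to give the conclusion.
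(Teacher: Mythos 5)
Your proposal is correct, and its primary route is genuinely different from the paper's. The paper argues pointwise with conditional densities (assumed, without loss of generality, to exist with respect to a $\sigma$-finite measure): it fixes a value $a$ of $X'\beta\lo t$, splits into the degenerate case $a \not\in \Omega(X'\beta\lo t \mid T=t)$ --- where $X'\beta\lo t = a$ forces $T = 1-t$ almost surely, so the independence holds trivially --- and the main case, where it computes $f(y(t) \mid X'\beta\lo t = a, T=t)$ by averaging $f(y(t) \mid X = x, T=t)$ over the conditional law of $X$, using ignorability to drop $T$ and the central-subspace property to collapse $x$ to $x'\beta\lo t$. That computation is exactly your ``equivalent'' integral formulation, so your second reading coincides with the paper's argument. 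Your first and primary reading, however, is more abstract and cleaner: since $\sigma(X, X'\beta\lo t) = \sigma(X)$, ignorability upgrades to $Y(t) \indep T \mid (X, X'\beta\lo t)$; contraction with $Y(t) \indep X \mid X'\beta\lo t$ yields $Y(t) \indep (X, T) \mid X'\beta\lo t$, and decomposition discards $X$. Both contraction and decomposition hold for arbitrary joint distributions (no positivity or dominance assumption is needed, and regular conditional distributions exist because all variables here are Borel-space valued), so the axiomatic route dispenses entirely with the density bookkeeping and, notably, with the support case split, which in the paper is an artifact of verifying the independence pointwise at each $a$; in the almost-sure formulation of conditional independence those null-probability conditioning events simply never arise. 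What the paper's explicit computation buys is a self-contained argument whose attention to $\Omega(X'\beta\lo t \mid T=t)$ foreshadows the weak common support condition that drives Theorem 2; what yours buys is brevity, generality, and a clear view that the balancing-score property is a purely formal consequence of the semigraphoid axioms. Your remarks on basis-independence (any two bases of the central subspace generate the same $\sigma$-field) and on the averaging measure integrating to one once the integrand is constant are both accurate and handle the only delicate points correctly.
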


\begin{proof}
Without loss of generality, we denote $f(y(t) | X, T=t)$ as the density function of the conditional distribution $Y(t) | (X, T =t)$ with respect to a $\sigma$-finite measure $\nu$, for $t=0, 1$. For any random element $R$, denote $\Omega(R | T=t)$ and $\Omega(R)$ as the support of $R | T=t$ and $R$, respectively. For any $a \in \Omega (X'\beta\lo t)$, if $a \not\in \Omega (X'\beta\lo t | T=t)$, then $X'\beta\lo t = a$ implies that $T = 1-t$ almost surely, which automatically implies that
\begin{equation}\label{eq:prf-thm-bal-1}
Y(t) \indep T \,|\, X'\beta\lo t = a.
\end{equation}
Thus we only need to show (\ref{eq:prf-thm-bal-1}) for any $a \in \Omega (X'\beta\lo t | T=t)$. That is, $f(y(t) | X'\beta\lo t = a, T=t) = f ( y(t) | X'\beta\lo t = a)$. By the definition of $\beta\lo t$, we have
\begin{equation*}
Y(t) \indep X \,|\, X'\beta\lo t,
\end{equation*}
which means that $f (y(t) | X) = f (y(t) | X'\beta\lo t)$. For any $x \in \Omega (X | T=t)$ such that $X'\beta\lo t = a$, $f ( y (t) | X = x, T=t) = f (y(t) | X=x) = f ( y(t) | X'\beta\lo t = a)$, where the first equality  is due to the ignorability assumption. Thus we have
\begin{eqnarray*}
f(y(t) | X'\beta\lo t = a, T=t) &=& E \{f(y (t) | X=x, T=t) | X'\beta\lo t =a, T=t \} \\
&=& E \{f( y(t) | X'\beta\lo t = a) | X'\beta\lo t =a, T=t \} \\
&=& f ( y(t) | X'\beta\lo t = a).
\end{eqnarray*}
This completes the proof.
\end{proof}

Based on this theorem, if we can estimate the central subspaces $\spn\lo {Y (0) | X}$ and $\spn\lo {Y(1) | X}$ consistently, say by $\spn (\hat \beta\lo 0)$ and $\spn (\hat \beta\lo 1)$, where $\spn (\cdot)$ denotes the linear span of a matrix, then we can use the reduced covariates $X'\hat \beta\lo 0$ in place of $X$ to find matches for each treated subject, and use $X'\hat \beta\lo 1$ to find matches for each control subject.  

Because we allow the two central subspaces to differ, we may end up using different criteria for different treatment groups in the Abadie-Imbens estimator. As mentioned before, this flexibility is meaningful, because it allows us to use the exactly useful information in the covariates without introducing noise. 

Due to the missingness in the observed data, the central subspaces in Theorem \ref{thm:sdr-bal} are not readily estimable by any aforementioned SDR method. The estimable central subspaces are those conditional on the treatment assignments, denoted by $\spn \lo {Y(0) | X}\udex D$ and $\spn\lo {Y(1) | X} \udex D$, where the superscript ``$D$" refers to data. They are spanned by the basis matrices $\beta\lo 0\udex D$ and $\beta\lo 1\udex D$ that satisfy
\begin{equation*}\label{assumption-sdr-data}
Y(0) \indep X \,|\, (X'\beta\lo 0\udex D, T= 0), \quad Y(1) \indep X \,|\, (X'\beta\lo 1\udex D, T= 1),
\end{equation*}
respectively. Fortunately, under the ignorability assumption and a mild additional condition, these estimable central subspaces are identical to their counterparts of interest, and thus can serve as the substitutes for the desired central subspaces.
In the following, we denote $\Omega(R | T=t)$ and $\Omega(R)$ as the support of $R | T=t$ and the support of $R$, respectively, for any random element $R$ and $t=0, 1$.

\def\csz{\spn\lo {Y(0) | X}}
\def\cso{\spn\lo {Y(1) | X}}
\def\csdz{\spn\lo {Y(0) | X}\udex D}
\def\csdo{\spn\lo {Y(1) | X}\udex D}

\begin{thm}\label{thm:equiv-cs}
Under the ignorability assumption (\ref{assumption-ig}), for $t=0, 1$, suppose that $\Omega(X'\beta\lo t | T=t) = \Omega (X' \beta\lo t)$, then $\spn\lo {Y(t) | X}\udex D = \spn\lo {Y(t)|X}$.
\end{thm}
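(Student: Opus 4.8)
The plan is to prove the set equality by the usual two inclusions, with $t \in \{0,1\}$ fixed throughout and every argument phrased in terms of the conditional densities $f(y(t)|x)$ and $f(y(t)|x, T=t)$, exactly as in the proof of Theorem \ref{thm:sdr-bal}. I read each sufficiency statement as the assertion that the relevant density \emph{depends on $x$ only through} a linear reduction, and the whole argument turns on tracking over which support such an assertion is valid.

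For the inclusion $\spn\lo{Y(t)|X}\udex D \subseteq \spn\lo{Y(t)|X}$ I would use ignorability alone. By definition of $\beta\lo t$, $f(y(t)|x)$ depends on $x$ only through $x'\beta\lo t$; by (\ref{assumption-ig}), $f(y(t)|x, T=t)=f(y(t)|x)$ for all $x \in \Omega(X|T=t)$, so the conditional density likewise depends on $x$ only through $x'\beta\lo t$ on the conditional support. Hence $Y(t)\indep X\,|\,(X'\beta\lo t, T=t)$, so $\beta\lo t$ is an admissible reduction for the data central subspace, and minimality of $\spn\lo{Y(t)|X}\udex D$ gives the inclusion. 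No support condition is needed here, since passing from $\Omega(X)$ to the smaller set $\Omega(X|T=t)$ is only a restriction.

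The reverse inclusion $\spn\lo{Y(t)|X} \subseteq \spn\lo{Y(t)|X}\udex D$ is where the hypothesis enters. The previous step lets me write $\beta\lo t\udex D = \beta\lo t M$ for some matrix $M$, so that $x'\beta\lo t\udex D = (x'\beta\lo t)M$ is a function of $s := x'\beta\lo t$. By definition of $\beta\lo t\udex D$ together with ignorability, for $x \in \Omega(X|T=t)$ the density $f(y(t)|x)$ depends on $x$ only through $x'\beta\lo t\udex D$. Reparametrising through $s$ — legitimate precisely because unconditional sufficiency of $\beta\lo t$ makes $f(y(t)|x)$ a well-defined function $\tilde g(y(t), s)$ of $s$ alone — this reads $\tilde g(y(t), s) = h(y(t), sM)$ for every $s \in \Omega(X'\beta\lo t|T=t)$. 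Invoking $\Omega(X'\beta\lo t|T=t) = \Omega(X'\beta\lo t)$ promotes this identity to all $s \in \Omega(X'\beta\lo t)$, hence to all $x \in \Omega(X)$; thus $f(y(t)|x)$ depends on $x$ only through $x'\beta\lo t\udex D$ unconditionally, i.e. $Y(t)\indep X\,|\,X'\beta\lo t\udex D$. Minimality of $\spn\lo{Y(t)|X}$ then gives the inclusion, and combining the two completes the proof.

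The main obstacle is exactly this support extension. Conditional sufficiency of $\beta\lo t\udex D$ is only guaranteed on $\Omega(X|T=t)$, and in general the functional form of the density need not persist off that set, so the reverse inclusion can genuinely fail without a support hypothesis. The support condition, transported through the reduction $s = x'\beta\lo t$, is what allows $\tilde g(y(t),s)=h(y(t),sM)$ to be promoted from the conditional reduced support to the full reduced support. The one delicacy is to first secure, via the first inclusion and unconditional sufficiency, that both sides may be viewed as functions of $s$, so that the equality of supports can be applied in the reduced coordinate rather than in $x$.
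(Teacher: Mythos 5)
Your proof is correct and takes essentially the same route as the paper's: the first inclusion from ignorability plus minimality of the data central subspace, and then the reverse inclusion by using that first inclusion to write $\beta\lo t\udex D = \beta\lo t M$ and invoking $\Omega(X'\beta\lo t \,|\, T=t) = \Omega(X'\beta\lo t)$ to extend conditional sufficiency to the full support. Your functional identity $\tilde g(y(t),s) = h(y(t), sM)$ in the reduced coordinate is just a reparametrization of the paper's witness-point argument, which for each $x \in \Omega(X)$ selects $x\udex * \in \Omega(X \,|\, T=t)$ with $x'\beta\lo t = x\udex {*\prime}\beta\lo t$ and chains the density equalities through $x\udex *$.
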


\begin{proof}
For simplicity, we only prove the theorem for $t=0$. The case for $t=1$ can be shown in the same manner. First, we show that $\csdz \subseteq \csz$, which is equivalent to:
\begin{equation}\label{eq:prf-thm-equi-1}
Y(0) \indep X \,|\, (X' \beta\lo 0, T=0).
\end{equation}
Following the notations in the proof of Theorem \ref{thm:sdr-bal}, we denote $f(\cdot | X)$ as the conditional density function of $Y(0) | X$ and $f(\cdot |X, T=0)$ as the conditional density function of $Y(0) | X$ when $T=0$. (\ref{eq:prf-thm-equi-1}) is equivalent to that $f(\cdot | X=x, T=0) = f(\cdot | X' \beta\lo 0 = x'\beta\lo 0, T=0)$ for any $x \in \Omega(X | T=0)$.  
By the ignorability assumption, $f(\cdot | X=x, T= 0) = f(\cdot | X = x)$, and by the definition of $\csz$, $f(\cdot | X=x) = f(\cdot | X'\beta\lo 0  = x'\beta\lo 0)$. Thus $f(\cdot | X=x, T=0)$ is measurable with respect to $x'\beta\lo 0$, which means that
\begin{eqnarray*}
f(\cdot | X=x, T=0) &=& E\{f(\cdot | X, T=0) | X'\beta\lo 0 = x'\beta\lo 0, T=0\} \\
&=& f(\cdot | X'\beta\lo 0 = x'\beta\lo 0, T=0).
\end{eqnarray*}
Hence (\ref{eq:prf-thm-equi-1}) holds. 
Conversely,
to show that $\csz \subseteq \csdz$, note that it is equivalent to
\begin{equation}\label{eq:prf-thm-equi-2}
Y(0) \indep X \,|\, X' \beta\lo 0\udex D,
\end{equation}
which holds if $f(\cdot | X=x) = f(\cdot | X' \beta\lo 0\udex D = x'\beta\lo 0\udex D)$ for any $x \in \Omega(X)$. Because $\Omega (X'\beta\lo 0 | T=0) = \Omega (X' \beta\lo 0)$, there exists $x\udex * \in \Omega (X | T=0)$ such that $x' \beta\lo 0 = x\udex {*\prime} \beta\lo 0$. By the definition of $\csz$, we have
\begin{equation*}
f(\cdot | X=x) = f(\cdot | X'\beta\lo 0 = x'\beta\lo 0) = f(\cdot | X' \beta\lo 0 = x\udex {*\prime} \beta\lo 0) =f (\cdot | X = x\udex *).
\end{equation*}
Since $x\udex * \in \Omega (X | T=0)$, by the ignorability assumption, $f(\cdot | X\udex *=x\udex *) = f(\cdot | X\udex * = x\udex *, T=0)$, which, by the definition of $\csdz$, further implies that $f(\cdot | X\udex *=x\udex *) = f(\cdot | X\udex {*\prime} \beta\lo 0\udex D = x\udex {*\prime} \beta\lo 0\udex D, T=0)$. Thus $f(\cdot | X\udex *=x\udex *)$ is measurable with respect to $x\udex {*\prime} \beta\lo 0\udex D$, which, similar to the above, implies that $f(\cdot | X\udex *=x\udex *) = f(\cdot | X\udex {*\prime} \beta\lo 0\udex D = x\udex {*\prime} \beta\lo 0\udex D)$.  Because $\csdz \subseteq \csz$, we have $x' \beta\lo 0\udex D = x\udex {*\prime} \beta\lo 0\udex D$. Hence $f(\cdot | X=x) = f(\cdot | X' \beta\lo 0\udex D = x'\beta\lo 0\udex D)$, which implies (\ref{eq:prf-thm-equi-2}). This completes the proof.
\end{proof}

The condition in this theorem requires that no matter whether the subject is known to be treated, the support of the covariates is invariant in the directions that are informative to the treated outcome, and likewise for the control outcome. As the condition does not impose any restriction in the directions of covariates that are redundant to the outcomes, it relaxes the strong common support condition mentioned in Section $2$, and is referred to as the weak common support condition. A similar result for causal inference based on the outcome regressions can be found in \citet{luo2016}.

The theorem justifies that we can estimate the interested central subspaces using the observed data. Various SDR methods can be employed. We choose SIR as an example, and describe its implementation in detail in Section $5$. Given the resulting estimates, denoted by $\hat \beta\lo 0$ and $\hat \beta\lo 1$, respectively, we use $X'\hat \beta\lo 0$ and $X' \hat \beta\lo 1$ in the subsequent matching methods mentioned in the Introduction. As discussed above, the proposed matching approach outperforms the existing approaches based on either the original covariates or the propensity score in multiple ways. These advantages are further formulated in an asymptotic sense in Section $5$.

For the consistency of SIR, the covariates in each treatment group must satisfy the linearity condition (\ref{assumption-linearity}), with $\Sigma\lo X$ being replaced with the conditional covariance matrix $\Sigma\lo {X|T}$. We allow $\Sigma\lo {X|T}$ to be stochastic, so that the proposed estimator is applicable, for example, when the covariates have different correlation structures in the treatment and control groups. When the reduced covariates in each group are elliptically distributed with a non-stochastic $\Sigma\lo {X|T}$ and equipped with an affinely invariant measure in the subsequent matching, it may be of interest to see whether the proposed approach is conditionally affinely invariant on the original covariates, which would then imply the conditionally equal percent bias reduction property (\citealp{rubin1992, rubin1996}).
The answer depends on in which space such invariance is referred to: the approach is conditionally affinely invariant within the estimated central subspace but not within the true central subspace, as the estimated reduced covariates are invariant if the original covariates are rotated within the former, whereas the invariance does not hold if the rotations are in the latter.

\section{Asymptotic study}

Using the Abadie-Imbens estimator (\ref{eq:impute}) on the reduced covariates, we now present some asymptotic results that illustrate the superiority of the proposed approach to those using the propensity score or the original covariates. For ease of presentation, we focus on the estimation of $ACE$. Similar results can be derived for $ACET$ with slight adjustments.

For simplicity, we fix the value of $m$ as the sample size grows for all the balancing scores.
In the literature, its has been commonly realized that the asymptotic property of matching is difficult to tackle, if the estimated propensity score, rather than its true value, is used (\citealp{imbens2015}). Thus, in such studies, the true propensity score is assumed known a priori (\citealp{abadie2006}),  unless a high-order kernel density function is used in kernel matching (\citealp{heckman1997}). Following this convention,
we assume both the propensity score and the central subspaces to be known a priori throughout the section. Although by doing this, we omit the effect of SDR estimation, the results in this section still help illustrate the advantage of using SDR for matching, and provide some guidelines when other SDR methods, such as the aforementioned SAVE and directional regression, are employed. In addition, as seen in the simulation studies, the effect of the SDR estimation is almost negligible in finite samples. For these reasons, we think that the results developed here are meaningful.

For regulation purpose, following \citet{abadie2006}, we adopt several assumptions as below:

\begin{assumption}\label{assumption-regularity}
The sample subjects are independent copies of $(X, T, Y(T))$. The distribution of $X$ is dominated by the Lebesgue measure on $\real\udex p$ with a compact support $\Omega (X)$. $E\{Y(t) | X\}$ is twice differentiable, and its Hessian matrix is Lipschitz continuous almost surely for $t=0, 1$.
\end{assumption}

\begin{assumption}\label{assumption-cmmn-sppt-stronger}
There exists $\tau > 0$ such that the propensity score $\pi(X)$ satisfies that $\pi(X) \in (\tau, 1 -\tau)$ almost surely.
\end{assumption}

Assumption \ref{assumption-regularity} is fairly general, and has been commonly adopted in the literature. Assumption \ref{assumption-cmmn-sppt-stronger} further regulates the strong common support condition in Section $2$. Similarly, we regulate the weak common support condition in the following, which relaxes Assumption \ref{assumption-cmmn-sppt-stronger}.

\begin{assumption}\label{assumption-cmmn-sppt-weak} Let $\pi(X'\beta\lo t) = P(T = 1 | X'\beta\lo t)$ for $t=0, 1$.
There exists $\tau\lo r > 0$ such that $\pi(X'\beta\lo 0) < 1 - \tau\lo r$ and $\pi(X'\beta\lo 1) > \tau\lo r $ almost surely.
\end{assumption}

Under these assumptions, the following theorem shows the effect of the dimensionality of the balancing score on the convergence order of the resulting matching estimator. By conducting a more careful study on the variance of the estimator, it strengthens the result in Theorem 1 of \citet{abadie2006} to allow a higher-dimensional balancing score for $\sqrt n$-consistent estimation.

\def\dy{\Delta Y}

\begin{thm}\label{thm:asym-bias}
Suppose that the ignorability assumption (\ref{assumption-ig}) and Assumption \ref{assumption-regularity} hold. Let $\mu$ be $ACE$ and $\hat \mu\lo {X}$, $\hat \mu\lo {\pi}$ and $\hat \mu\lo {r}$ be the Abadie-Imbens estimator using the original covariates, the true propensity score, and the reduced covariates from the true central subspaces, respectively. Then,

\noindent
(1) under Assumption \ref{assumption-cmmn-sppt-stronger},
\begin{eqnarray*}
&& E(\hat \mu\lo {X}) - \mu = O (n\udex {-2/p}), \mbox{ and }
\var(\hat \mu\lo {X}) = \bop (n\udex {-\min\{1,6/p\}}), \\
&& E(\hat \mu\lo {\pi}) - \mu = O (n\udex {-2}), \mbox{ and }
\var(\hat \mu\lo {\pi}) = \bop (n\udex {-1}).
\end{eqnarray*}

\noindent
(2) under Assumption \ref{assumption-cmmn-sppt-weak} and the linearity condition (\ref{assumption-linearity}) for $X|T$,
\begin{eqnarray*}
E(\hat \mu\lo {r}) - \mu = O (n\udex {-2/\max\{r(0), r(1)\}}), \mbox { and } \var(\hat \mu\lo {r}) = \bop (n\udex {-\min\{1,6/r(0), 6/r(1)\}}).
\end{eqnarray*}
\end{thm}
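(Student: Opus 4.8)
The plan is to reduce all the assertions to a single master estimate for the Abadie--Imbens estimator built on a generic $k$-dimensional balancing score $S=S(X)$, and then to specialize it with $k=p$ (for $X$), $k=1$ (for $\pi$), and $k=r(0),r(1)$ (for the reduced covariates). First I would isolate the regularity package the master estimate needs: $S$ is a balancing score, so by (\ref{assumption-bal-score}) the cross-group imputation is unbiased in the population and $E\{Y(t)\mid S,T=t\}=E\{Y(t)\mid S\}=:\mu\lo t(S)$; opposite-group matches exist (a one-sided common-support statement); and $S\mid T=t$ has a Lebesgue density bounded away from zero on a compact support, with twice-differentiable $\mu\lo t$ having a Lipschitz-continuous Hessian. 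Granting this package, the master estimate reads $E(\hat\mu\lo S)-\mu=O(n^{-2/k})$ and $\var(\hat\mu\lo S)=\bop(n^{-\min\{1,6/k\}})$.

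For the bias I would use the decomposition $\hat\mu\lo S-\mu=\bar E+\bar B+\bar V$, where $\bar E=n^{-1}\sum_i\{\mu\lo 1(S_i)-\mu\lo 0(S_i)-\mu\}$, the term $\bar B$ collects the matching discrepancies $m^{-1}\sum_{j\in J\lo m(i)}\mu\lo{1-T_i}(S_j)-\mu\lo{1-T_i}(S_i)$, and $\bar V$ collects the outcome residuals $Y_l-\mu\lo{T_l}(S_l)$. Since $\bar E$ and $\bar V$ are mean zero, the bias equals $E(\bar B)$. Taylor-expanding $\mu\lo{1-T_i}$ to second order at $S_i$ and invoking that the nearest-neighbour distance $D_i=\|S_{j(i)}-S_i\|$ obeys $E(D_i^s\mid S_i)=O(n^{-s/k})$, the quadratic term contributes $O(n^{-2/k})$; the linear term also contributes $O(n^{-2/k})$, because the expected signed discrepancy $E(S_{j(i)}-S_i\mid S_i)$ is of that order once one accounts for the interior cancellation of its symmetric part and for the boundary layer of mass $O(n^{-1/k})$, on which the discrepancy is one-sided and of size $O(n^{-1/k})$. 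The Lipschitz-Hessian remainder is of smaller order, giving $E(\hat\mu\lo S)-\mu=O(n^{-2/k})$.

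The variance is the crux and the place where the result sharpens Theorem 1 of \citet{abadie2006}. I would split $\var(\hat\mu\lo S)$ into the outcome-residual contribution from $\bar V$ (with the oracle term $\bar E$) and the matching-discrepancy contribution from $\bar B$. The first is $\bop(n^{-1})$: conditionally on the covariates and treatment labels, $\bar V$ has variance $n^{-2}\sum_l(1+K_l/m)^2\sigma_l^2$, and bounding the match-usage count $K_l$ almost surely by a fixed-dimension ``kissing-number'' constant keeps this at $\bop(n^{-1})$, while $\var(\bar E)=O(n^{-1})$ is the ordinary variance of the sample average treatment effect. For the second I would control the fluctuation of $\bar B$ through the nearest-neighbour moments, carrying the bookkeeping far enough that the leading discrepancy contribution is governed by $E(D_i^6)=O(n^{-6/k})$ and using that each $J\lo m(i)$ overlaps the matched sets of only $O(1)$ neighbouring units, so the discrepancy terms are weakly dependent. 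Taking the larger of the two bounds yields $\var(\hat\mu\lo S)=\bop(n^{-\min\{1,6/k\}})$. The hard part will be pushing this moment and local-dependence accounting far enough to secure the exponent $6/k$ rather than the coarser exponent implicit in \citet{abadie2006}; this is exactly what enlarges the range of dimensions compatible with $\sqrt n$-consistency.

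It then remains to verify the regularity package in each case and assemble the statements. For $X$ and $\pi$ the balancing-score property is classical and Assumption \ref{assumption-cmmn-sppt-stronger} supplies two-sided common support, so the master estimate applies with $k=p$ and $k=1$, and Assumption \ref{assumption-regularity} gives the smoothness and density requirements directly. For the reduced covariates, Theorem \ref{thm:sdr-bal} makes $X'\beta\lo t$ a balancing score, while the relation $Y(t)\indep X\mid X'\beta\lo t$ forces $\mu\lo t$ to factor through $X'\beta\lo t$, so $\mu\lo t$ inherits twice differentiability and a Lipschitz Hessian from Assumption \ref{assumption-regularity}; the linearity condition (\ref{assumption-linearity}) for $X\mid T$, together with the compactly supported Lebesgue density of $X$, transfers the density regularity to $X'\beta\lo t\mid T=t$, and Assumption \ref{assumption-cmmn-sppt-weak} supplies exactly the one-sided common support ensuring controls near each treated unit in $X'\beta\lo 0$ and treated units near each control in $X'\beta\lo 1$. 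Since treated units are matched in the $r(0)$-dimensional score and controls in the $r(1)$-dimensional score, the bias is governed by the larger dimension, giving $O(n^{-2/\max\{r(0),r(1)\}})$, and the variance by the worse of the two bounds, giving $\bop(n^{-\min\{1,6/r(0),6/r(1)\}})$, which is part (2).
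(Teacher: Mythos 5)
Your architecture --- a single master estimate for a generic $k$-dimensional balancing score, the Abadie--Imbens three-term decomposition into an oracle average, a residual term weighted by the match counts $K(i)$, and a matching-discrepancy term $\bar B$, followed by a second-order Taylor expansion with the nearest-neighbour moments $E(D_i^s)=O(n^{-s/k})$ and the interior-symmetry/boundary-layer argument for the linear bias term --- is essentially the paper's skeleton: the paper proves the $\hat\mu_X$, $m=1$ case and notes the general case (including part (2), run group-by-group in dimensions $r(0)$ and $r(1)$) follows similarly. Your bias claim $O(n^{-2/k})$ and the $O_P(n^{-1})$ accounting for the oracle and residual terms match the paper's treatment.

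The gap is at the step you yourself flag as the hard part: the variance of $\bar B$. You propose to control the off-diagonal terms by local dependence, arguing that each $J_m(i)$ overlaps the matched sets of only $O(1)$ other units, so the discrepancy terms are weakly dependent. That mechanism fails in two ways. First, nearest-neighbour discrepancies are globally dependent: for two units with disjoint matched sets the covariance of their discrepancies is not zero, and Cauchy--Schwarz bounds it only by $O(n^{-2/k})$, which summed over the $\sim n^2$ pairs gives $\mathrm{var}(\bar B)=O(n^{-2/k})$ --- far too coarse. To reach $n^{-\min\{1+2/k,6/k\}}$ you must prove each such cross-covariance is $O(n^{-1-2/k})$, a full factor $n^{-1}$ below the trivial bound; since these $n^2$ pairs then contribute at exactly the critical order $n^{-1-2/k}$, they can neither be discarded by an overlap count nor absorbed by it. Second, for cross-group pairs (control $i$, treated $j$), unit $j$ can itself be $i$'s match, so the joint law of the two discrepancies carries point masses that an overlap heuristic does not see. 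The paper supplies precisely the missing ingredient: it computes the joint density $f(u,v)$ of the scaled discrepancies $u$ and $v$ (with Dirac components $\delta_{\|a-b\|}$ in the cross-group case), observes that $f(u,v)-f(u)f(v)$ is driven by $\eta_{a,b}=\{1-(r_n+s_n)n^{-1}\}^{n}-(1-r_n n^{-1})^{n}(1-s_n n^{-1})^{n}$, and proves the elementary limit
\begin{equation*}
n\left[\{1-(r+s)n^{-1}\}^{n}-(1-rn^{-1})^{n}(1-sn^{-1})^{n}\right]\rightarrow -rs\,e^{-(r+s)},
\end{equation*}
whence $\mathrm{cov}\{(X^{J(i)}-X^{i})^{\otimes k'},(X^{J(j)}-X^{j})^{\otimes l}\}=O(n^{-1-2/p})$ for $k',l\in\{1,2\}$, uniformly over pairs by compactness of $\Omega(X)$. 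With that estimate the linear terms contribute $n^{-1-2/k}$, the uncancelled cubic remainder contributes $n^{-6/k}$, and combining with the $O(n^{-1})$ from the other two terms yields $\mathrm{var}=O_P(n^{-\min\{1,6/k\}})$. Without this (or an equivalent quantitative decorrelation bound for non-overlapping pairs), your variance step does not go through, and that step is exactly what sharpens Theorem 1 of Abadie and Imbens (2006).
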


\begin{proof}
We first show a mathematical property that will be used in the proof. Let $r, s> 0$ be arbitrary positive real numbers. We show that, as $n \rightarrow \infty$,
\begin{eqnarray}\label{eq:lemma-conv-exp}
n \left[\{1 - (r + s) n\udex {-1}\} \udex n - (1 - rn\udex {-1}) \udex n (1 - s n\udex {-1})\udex n\right] \rightarrow - r s \exp\{-(r+s)\},
\end{eqnarray}
which is a complement of the well known result $\lim\lo {n\rightarrow \infty} (1 - r n\udex {-1})\udex n = e\udex {-r}$. To see why it holds, re-write the left-hand side as
\begin{equation*}
n[\{1 - (r + s) n\udex {-1}\} \udex n - \{1 - (r + s) n\udex {-1} + r s n\udex {-2}\}\udex n ],
\end{equation*}
and divide it by $\{1 - (r + s) n\udex {-1}\}\udex n$, which converges to $\exp\{-(r+s)\}$. Then (\ref{eq:lemma-conv-exp}) is equivalent to
\begin{equation*}
n \left[1 -  [1 + rs n\udex {-2} \{1 - (r + s) n\udex {-1}\}\inv]\udex n \right] \rightarrow - r s.
\end{equation*}
Denote the left-hand side above by $\epsilon\lo n$. By simple algebra, 
\begin{equation*}
\lim\lo {n\rightarrow \infty }( 1 - \epsilon\lo n / n)\udex n = e\udex {rs},
\end{equation*}
which means that $\epsilon\lo n \rightarrow - (rs)$, and (\ref{eq:lemma-conv-exp}) holds.
For simplicity, we prove the theorem for $\hat \mu\lo X$ and $m=1$, and denote $J\lo 1(i)$ by $J(i)$ for each subject $i$. The general case can be shown similarly. Denote $n\lo 0$ and $n\lo 1$ as the number of control and treated subjects, respectively. As we regard $T$ to be random, so are $n\lo 0$ and $n\lo 1$.
By the law of large numbers, for $t=0,1$, $n\lo t n\inv \rightarrow (1-t) + (2t-1) P(T=1)$ in probability. Thus, for any $w \in \real$, $\bop (n\lo t\udex w) = \bop (n\udex w)$.
For efficiency of presentation, without loss of generality, we treat each $T\sam i$ as given, as well as $n\lo 0$ and $n\lo 1$, in the rest of the proof.
Following equation (7) in \cite{abadie2006}, let $K(i)$ be the number of times subject $i$ is used to match the others. $\hat \mu\lo X$ can be decomposed as
\begin{eqnarray*}
\hat \mu\lo X - \mu &=& n\inv  \tsum\lo {i=1}\udex n [E\{Y(1) | X\sam i\} - E\{Y(0) | X\sam i\} - \mu] \\
&+& \hspace{0.2cm} n\inv  \tsum\lo {i=1}\udex n (2 T\sam i - 1) \{1 + K(i)\}[Y\sam i (T\sam i) - E\{Y(T) | X\sam i \}]  \\
&+& \hspace{0.2cm} n\inv  \tsum\lo {i=1}\udex n (2 T\sam i - 1)[E\{Y(1- T) | X\sam i\} - E\{Y(1-T) | X\sam {J (i)}\}],
\end{eqnarray*}
in which the first two terms on the right hand side are easily seen to be $\bop (n\udex {-1/2})$ with mean zero (\citealp{abadie2006}). The third term represents the bias in matching, which we denote by $B\lo n$. To show the result about $\hat \mu\lo X$, it suffices to show that $E(B\lo n) = O(n\udex {-2/p})$ and $\var (B\lo n) = O(n\udex {-\min\{1+2/p,6/p\}})$. By Assumption \ref{assumption-regularity} and Taylor's expansion,
\begin{eqnarray*}
&& E\{Y(1- T) | X\sam i\} - E\{Y(1-T) | X\sam {J (i)}\} = (X\sam i - X\sam {J(i)})' g\lo i\sam i \\
&& \hspace{1.5cm} + \, (X\sam i - X\sam {J(i)})' g\lo 2\sam i (X\sam i - X\sam {J(i)}) + g\lo 3\udex {(i,J(i))} \|X\sam i - X\sam {J(i)}\|\udex 3,
\end{eqnarray*}
where $g\lo 1\sam i$ and $g\lo 2 \sam i$ are the gradient and the Hessian matrix of $E\{Y(1-T) \mid X\}$ at $X\sam i$, and $g\lo 3\udex {(i, J(i))}$ is a bounded random element.
Thus, for the desired result about $E(B\lo n)$ and $\var (B\lo n)$, it suffices to show that, for an arbitrary pair of subjects $(i,j)$,
\begin{eqnarray}\label{eq:prf-thm-bias-1}
&& X\sam {J(i)} - X\sam i = \bop (n\udex {-1/p}), E(X\sam {J(i)} - X\sam i) = O(n\udex {-2/p}), \nonumber \\
&& \cov \{(X\sam {J(i)} - X\sam i)\udex {\otimes k}, (X\sam {J(j)} - X\sam j)\udex {\otimes l}\} = O(n\udex {-1-2/p}).
\end{eqnarray}
in which $k, l \in  \{1, 2\}$, and $v\udex {\otimes 1} = v$ and $vv\udex {\otimes 2} = vv\trans$ for any real vector $v$.
To show (\ref{eq:prf-thm-bias-1}), we first suppose that both $i$ and $j$ belong to the same treatment group $T=t$.
For any $a, b \in \Omega(X |T = t)$, let $(u,v) = n\lo {1-t}\udex {1/p} (X\sam {J(a)} - a, X\sam {J(b)} - b)$, in which $J(a)$ denotes the subject whose covariates are closest to a, and $J(b)$ likewise.
From the proof of Theorem 1 in \citet{abadie2006}, let $f(\cdot)$ be the density function of random elements measurable with respect to $\{X\sam i, i=1,\ldots, n\}$, we have

{\small {
\begin{eqnarray}\label{eq:prf-thm-bias-2}
f(u) &=& f(a + u n\lo {1-t}\udex {-1/p}) \{1 - P (\|X - a \| \leq \|u\| n\lo {1-t}\udex {-1/p} )\}\udex {n\lo {1-t}-1} \nonumber \\
&=& \{f(a) + f\udex {*\prime} (a, u) u n\lo {1-t}\udex {-1/p}\} \{1 - P (\|X - a \| \leq \|u\| n\lo {1-t}\udex {-1/p} )\}\udex {n\lo {1-t} - 1},
\end{eqnarray}}}
in which $f\udex *(a, u)$ is defined so as to make the equation holds. By Assumption $1$, $f\udex *(a, u)$ is bounded. As shown in \cite{abadie2006}, $\{1-P (\|X - a \| \leq \|u\| n\lo {1-t}\udex {-1/p} )\}\udex {n\lo {1-t}}$ converges to $\exp [- 2 \pi\udex{p/2}\|u\|\udex p f(a) / \{p\Gamma(p/2)\}]$. Thus $u = \bop (1)$, which implies that $X\sam {J(a)} - a = \bop (n\udex {-1/p})$. Using the symmetry of $\{1-P (\|X - a \| \leq \|u\| n\lo {1-t}\udex {-1/p} )\}\udex {n\lo {1-t}}$ about the origin in $\real\udex p$, \citet{abadie2006} further showed that
$E(X\sam {J(a)} - a) = O(n\udex {-2/p})$. By the compactness of $\Omega(X)$, 
such convergence is uniform for $a$, thus $X\sam {J(i)} - X\sam i = \bop (n\udex {-1/p})$ and $E(X\sam {J(i)} - X\sam i) = O(n\udex {-2/p})$. Similar to (\ref{eq:prf-thm-bias-2}), 
we further have
\begin{eqnarray}\label{eq:prf-thm-bias-3}
f(u,v) &=&
\{n\lo {1-t}(n\lo {1-t}-1)\} n\lo {1-t} \udex {-2} f(a + u n\lo {1-t}\udex {-1/p}) f(b + v n\lo {1-t}\udex {-1/p}) \nonumber \\
&& \{1 - P (\|X - a \| \leq \|u\| n\lo {1-t}\udex {-1/p} \mbox{ or } \|X - b\| \leq \|v\| n\lo {1-t}\udex {-1/p})\}\udex {n\lo {1-t}-2} \nonumber
\end{eqnarray}
in which $P (\|X - a \| \leq \|u\| n\lo {1-t}\udex {-1/p} \mbox{ or } \|X - b\| \leq \|v\| n\lo {1-t} \udex {-1/p})$ can be written as
\begin{eqnarray*}
&& P (\|X - a \| \leq \|u\| n\lo {1-t} \udex {-1/p}) + P(\|X - b\| \leq \|v\| n\lo {1-t} \udex {-1/p}) \\
&& \hspace{0.3cm} - \, P (\|X - a \| \leq \|u\| n\lo {1-t}\udex {-1/p} \mbox { and } \|X - b\| \leq \|v\| n\lo {1-t}\udex {-1/p}) \\
&& \equiv \one + \two + \three
\end{eqnarray*}
Obviously, $\three = 0$ for all large $n$. Let
$r\lo n = n\lo {1-t} P (\|X - a \| \leq \|u\| n\lo {1-t} \udex {-1/p})$, 
and $s\lo n = n\lo {1-t} P (\|X - b \| \leq \|v\| n\lo {1-t} \udex {-1/p})$.
Then
\begin{eqnarray*}
f(u,v) = (1-n\lo {1-t} \udex{-1}) f(a + u n\lo {1-t}\udex {-1/p}) f(b + v n\lo {1-t}\udex {-1/p}) (1-r\lo n n\lo {1-t}\inv - s\lo n n\lo {1-t}\inv)\udex {n\lo {1-t}-2}
\end{eqnarray*}
for all large $n$, and $r\lo n\rightarrow r$ and $s\lo n\rightarrow s$, in which $r = 2 \pi\udex{p/2}\|u\|\udex p f(a) / \{p\Gamma(p/2)\}$ and $s = 2 \pi\udex{p/2}\|v\|\udex p f(b) / \{p\Gamma(p/2)\}$.
Since
\begin{eqnarray*}
f(u)f(v) &=& f(a + u n\lo {1-t}\udex {-1/p}) f(b + v n\lo {1-t}\udex {-1/p})
(1-r\lo n n\lo {1-t}\inv)\udex {n\lo {1-t} - 1}(1-s\lo n n\lo {1-t}\inv)\udex {n\lo {1-t} - 1},
\end{eqnarray*}
we have, for $k,l \in \{1, 2\}$,
\begin{eqnarray*}
&& E\{u\udex {\otimes {k}} (v\udex {\otimes l})'\} - E(u\udex {\otimes k})E'(v\udex {\otimes l}) \\
&& \hspace{0.5cm} = \tint\lo {\real\udex p\times \real\udex p} u\udex {\otimes {k}} (v\udex {\otimes l})' \{f(u, v) - f(u) f(v)\} du dv \\
&& \hspace{0.5cm} = \tint\lo {\real\udex p\times \real\udex p} u\udex {\otimes {k}} (v\udex {\otimes l})' f(a + u n\lo {1-t}\udex {-1/p}) f(b + v n\lo {1-t}\udex {-1/p}) \eta\lo {a,b}  du dv + O(n\inv),
\end{eqnarray*}
in which $\eta\lo {a,b} = \{1 - r\lo n n\lo {1-t}\inv - s\lo n n\lo {1-t}\inv \}\udex {n\lo {1-t}} - (1 - r\lo n n\lo {1-t}\inv)\udex {n\lo {1-t}} (1-s\lo n n\lo {1-t}\inv)\udex {n\lo {1-t}}$. By (\ref{eq:lemma-conv-exp}), $\eta\lo {a,b} \rightarrow - {n\lo {1-t}}\inv r s \exp \{-(r+s)\}$. Since $\|u\|\udex {2} \|v\|\udex 2 rs \exp \{-(r+s)\}$ is integrable on $(u,v) \in \real\udex p \times \real\udex p$, similar to \cite{abadie2006},
we have $E\{u\udex {\otimes {k}} (v\udex {\otimes l})'\} - E(u\udex {\otimes k})E'(v\udex {\otimes l}) = O(n\udex{-1})$. Next, suppose that $i$ and $j$ are control and treated subjects, respectively. Conditioning on $(X\sam i, X\sam j) = (a,b)$, let $u = n\lo 1\udex {1/p}(X\sam {J(a)} - a)$ and $v = n\lo 0\udex {1/p} (X\sam {J(b)} - b)$, we have
\begin{eqnarray}\label{eq:prf-thm-bias-3}
f(u,v) &=&
I(\max\{n\lo 1\udex {-1/p}\|u\|, n\lo 0\udex {-1/p}\|v\|\} < \|a-b\|) (1-n\lo 1\inv) (1-n\lo 0\inv) \nonumber \\
&& f(a + u n\lo 1\udex {-1/p}) f(b + v n\lo 0\udex {-1/p}) (1 - r\lo n)\udex {n\lo 1 - 2} (1 - s\lo n)\udex {n\lo 0 - 2} \nonumber \\
&& \hspace{-0.2cm} + \, I(n\lo 1\udex {-1/p}\|u\| < \|a-b\|)\delta\lo {\|a-b\|}(\|v\|)(1-n\lo 1\inv) n\lo 0\inv f(a + u n\lo 1\udex {-1/p}) \nonumber \\
&& \hspace{0.2cm} (1 - r\lo n)\udex {n\lo 1 - 2} (1-s\lo n) \udex {n\lo 0 - 1} \nonumber \\
&& \hspace{-0.2cm} + \, I(n\lo 0\udex {-1/p}\|v\| < \|a-b\|)\delta\lo {\|a-b\|}(\|u\|)(1-n\lo 0\inv) n\lo 1\inv f(b + v n\lo 0\udex {-1/p}) \nonumber \\
&& \hspace{0.2cm} (1 - s\lo n)\udex {n\lo 0 - 2} (1-r\lo n) \udex {n\lo 1 - 1} \nonumber \\
&& \hspace{-0.2cm} + \, \delta\lo {(\|a-b\|, \|a-b\|)}((\|u\|,\|v\|)) n\lo 1\inv n\lo 0\inv (1-r\lo n)\udex {n\lo 1 -1 } (1-s\lo n) \udex {n\lo 0 - 1}, \nonumber
\end{eqnarray}
in which $\delta\lo w(x)$ is the Dirac function of $x$ such that it is zero whenever $x \neq w$ and $\tint \lo {\real\udex {dim(x)}} \delta\lo w (x) h(x) dx = h(w)$ for any function $h$ of $x$.
Similar to the above, we can show that $\cov(u\udex {\otimes {k}}, v\udex {\otimes l}) = O(n\udex {-1})$. By the compactness of $\Omega(X)$, this convergence is uniform on $(a, b)$, which means that $\cov\{(X\sam {J(i)} -X\sam i)\udex {\otimes k}, (X\sam {J(j)} - X\sam j)\udex {\otimes l}\} = O(n\udex{ - 1-2/p})$. Hence (\ref{eq:prf-thm-bias-1}) holds, which completes the proof.
\end{proof}

The theorem suggests that, compared to the variance, the bias of the matching estimator is more sensitive to the dimension of the balancing score. In fact, if we assume a higher-order differentiability of the outcome regression function in Assumption \ref{assumption-regularity}, then a higher-dimensional balancing score will be allowed for the resulting matching estimator to have variance of order $O(n\inv)$. In particular, we speculate that if the outcome regression function is smooth, then the variance of the matching estimator is always $O(n\inv)$, regardless of the dimension of the balancing score used. The simulation studies reveal this point.

From this theorem, whenever the dimension of the balancing score is greater than four, it determines the convergence order of the resulting estimator of $ACE$. Intuitively, this is caused by the fact that when the sample size is fixed and more components are added to the covariates, the empirical distribution of the covariates becomes more sparse, so that each subject $i$ is less similar to those in its neighborhood $J\lo m (i)$.
From this point of view, the propensity score is optimal among all, as it always generates a $\sqrt n$-consistent estimator. By contrast, using the original covariates is the worst choice.

\def\bopp{O\lo p\udex +}

When the central subspaces for both the treatment and control groups are at most four-dimensional, the reduced covariates from SDR become a comparable choice to the propensity score, as they also provide a $\sqrt n$-consistent estimator of $ACE$. This is very common in applications. For example, the central subspace is one-dimensional under the popular single-index model. In this special case, the following theorem suggests that the reduced covariates can further outperform the propensity score, in the sense that they produce an asymptotically stabler matching estimator. For regulation purpose, in addition to Assumption \ref{assumption-regularity}, we adopt the following assumption:

\begin{assumption}\label{assumption-regularity-complement}
For $t=0,1$, almost surely, $E\{Y\udex w (t)|X\}$ is Lipschitz continuous when $w=2$ and uniformly bounded when $w=4$, and $\var \{Y(t) | X\} > 0$.
\end{assumption}

\begin{thm}\label{thm:asym-var}
Suppose the ignorability assumption (\ref{assumption-ig}), Assumptions \ref{assumption-regularity}, \ref{assumption-cmmn-sppt-weak} and \ref{assumption-regularity-complement} hold. If
both $\csz$ and $\cso$ are one-dimensional, then
\begin{eqnarray*}
&& n \var (\hat\mu\lo r) \rightarrow V \lo X + \sum\lo {t=0}\udex 1 \left[ E \left\{\frac{\var \{Y(t) | X\}} {h(\pi (X'\beta\lo t),t)}\right\}\right. \nonumber \\
&& \hspace{0.5cm} + \left.\frac{1}{2m}E\left[\left\{\frac{1}{ h(\pi(X'\beta\lo t), t)} - h (\pi (X'\beta\lo t), t)\right\}\var \{Y(t) | X\}\right]\right],
\end{eqnarray*}
where $V\lo X = \var[E\{Y(1) - Y(0) | X \}]$, $h(\pi(X'\beta\lo 0), 0) = 1- \pi(X'\beta\lo 0)$ and $h(\pi(X'\beta\lo 1), 1) = \pi(X'\beta\lo 1)$. In addition, $\var (\hat\mu\lo r) \leq \var (\hat\mu\lo \pi)$ for all large $n$ when the latter exists, and the equality holds in one of the following cases:
\begin{itemize}
\item[(i)] $T \indep X$, and $E\{ Y(t) | X\} \equiv E\{Y(t)\}$ for $t=0, 1$.
\item[(ii)] $\csz = \cso$, $T \indep X | X'\beta\lo 0$, and $E\{Y(t) | X'\beta\lo t\} = E\{Y(t) | \pi (X'\beta\lo t)\}$ for $t=0, 1$.
\end{itemize}
\end{thm}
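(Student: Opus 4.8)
The plan is to follow the Abadie--Imbens decomposition already used in the proof of Theorem \ref{thm:asym-bias}, but refined to extract the exact limiting variance. Writing $\hat\mu\lo r - \mu = T\lo 1 + T\lo 2 + B\lo n$, where $T\lo 1 = n\inv\sum\lo i[E\{Y(1)|X\sam i\} - E\{Y(0)|X\sam i\} - \mu]$, $T\lo 2 = n\inv\sum\lo i(2T\sam i - 1)(1 + K(i)/m)[Y\sam i(T\sam i) - E\{Y(T)|X\sam i\}]$ with $K(i)$ the number of times subject $i$ is used as a match, and $B\lo n$ the bias term, I would first dispose of the easy pieces. Because $r(0)=r(1)=1\le 4$, the orders in the proof of Theorem \ref{thm:asym-bias} (with the score dimension equal to $1$) give $E(B\lo n)=O(n\udex{-2})$ and $\var(B\lo n)=O(n\udex{-3})$, so $\sqrt n\,B\lo n=\sop(1)$ and $B\lo n$ is asymptotically negligible. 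Since $T\lo 1$ depends on the covariates only while $T\lo 2$ has conditional mean zero given $(X,T)$, they are uncorrelated, and $n\var(T\lo 1)\to V\lo X$ by the law of large numbers. Hence $n\var(\hat\mu\lo r)$ has the same limit as $V\lo X + n\var(T\lo 2)$.

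Next I would evaluate $\lim n\var(T\lo 2)$. Conditional on $\{(X\sam i,T\sam i)\}$ the counts $K(i)$ are fixed, and by the ignorability assumption the residuals $Y\sam i(T\sam i)-E\{Y(T)|X\sam i\}$ are independent with mean zero and variance $\var\{Y(T\sam i)|X\sam i\}$, so $n\var(T\lo 2)=n\inv\sum\lo i E[(1+K(i)/m)^2\var\{Y(T\sam i)|X\sam i\}]$, whose limit by identical distribution is $E[(1+K/m)^2\var\{Y(T)|X\}]$ for a generic subject. The crucial simplification is that, because $\spn\lo {Y(t)|X}$ is the central subspace, $Y(t)\indep X\,|\,X'\beta\lo t$, so $\var\{Y(t)|X\}=\var\{Y(t)|X'\beta\lo t\}$ is a function of the one-dimensional score $X'\beta\lo t$. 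Splitting by treatment and conditioning on $X'\beta\lo t$ then collapses the whole expression to an integral over the reduced score against the conditional variance.

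The main obstacle is the limiting first two moments of $K(i)$ given the reduced score and the treatment. For the $t=0$ group, matched on $X'\beta\lo 0$, set $\rho=\pi(X'\beta\lo 0)/\{1-\pi(X'\beta\lo 0)\}$. The local (Poisson-type) analysis of nearest-neighbour counts from \citet{abadie2006}, carried out here in the reduced one-dimensional space, should yield $E\{K|X'\beta\lo 0,T=0\}\to m\rho$ and $\var\{K|X'\beta\lo 0,T=0\}\to m\rho+(m/2)\rho^2$; the overdispersion term $(m/2)\rho^2$ is exactly what generates the $1/(2m)$ correction. Assumption \ref{assumption-cmmn-sppt-weak} bounds $\rho$ and Assumption \ref{assumption-regularity-complement} bounds the conditional variances, so these quantities are uniformly integrable and dominated convergence applies. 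Substituting $E[(1+K/m)^2|X'\beta\lo 0,T=0]$ and multiplying by $P(T=0|X'\beta\lo 0)=1-\pi(X'\beta\lo 0)$ reproduces $\{1/h\lo 0+(2m)\inv(1/h\lo 0-h\lo 0)\}\var\{Y(0)|X\}$ with $h\lo 0=1-\pi(X'\beta\lo 0)$; the symmetric computation on $X'\beta\lo 1$ gives the $t=1$ term, establishing the limit formula.

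For the comparison with the propensity score I would apply the same derivation to the one-dimensional balancing score $\pi(X)$. Since $\pi(X)$ is generally not a central subspace, $\var\{Y(t)|X\}$ does not reduce, and the $t=0$ contribution becomes $E[\var\{Y(0)|X\}\,g(1-\pi(X))]$, where $g(u)=(1+(2m)\inv)/u-(2m)\inv u$ is convex on $(0,1)$, whereas the reduced-covariate contribution is $E[\var\{Y(0)|X\}\,g(1-\pi(X'\beta\lo 0))]$. Using the tower property $\pi(X'\beta\lo 0)=E\{\pi(X)|X'\beta\lo 0\}$ together with the $X'\beta\lo 0$-measurability of $\var\{Y(0)|X\}$, Jensen's inequality conditional on $X'\beta\lo 0$ gives $E[g(1-\pi(X))|X'\beta\lo 0]\ge g(1-\pi(X'\beta\lo 0))$, so the reduced-covariate term is dominated; the same holds for $t=1$, and since $V\lo X$ is common to both estimators I conclude $\var(\hat\mu\lo r)\le\var(\hat\mu\lo \pi)$ for large $n$. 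Equality forces the conditional Jensen gap to vanish wherever $\var\{Y(t)|X\}>0$, i.e. $\pi(X)$ must be $X'\beta\lo t$-measurable; case (i) makes $\pi(X)$ constant and case (ii) forces $\pi(X)=\pi(X'\beta\lo 0)$ through $T\indep X\,|\,X'\beta\lo 0$, so I would finish by checking that in each case the remaining terms of the two limits coincide.
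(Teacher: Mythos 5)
Your derivation of the limit of $n \var(\hat\mu_r)$ is essentially a re-proof of what the paper obtains by simply citing Theorem 5 of \citet{abadie2006}, and it checks out: with $\rho = \pi/(1-\pi)$, your limiting moments $E(K)\rightarrow m\rho$ and $\var(K)\rightarrow m\rho + (m/2)\rho^2$ do reproduce the stated formula, since $(1-\pi)\{(1+\rho)^2 + \var(K)/m^2\} = 1/(1-\pi) + (2m)^{-1}\{1/(1-\pi)-(1-\pi)\}$; the negligibility of the bias term follows from the orders in the proof of Theorem 3 with the dimension set to one; and your conditional Jensen step, $E\{g(1-\pi(X))\,|\,X'\beta_0\} \geq g(1-\pi(X'\beta_0))$ combined with the $X'\beta_t$-measurability of $\var\{Y(t)|X\}$, is exactly the paper's third and fourth inequalities (convexity of $1/x$ and of $1/x - x$).

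The comparison half, however, has a genuine gap: your claimed limit for propensity-score matching, $V_X + \sum_t E[\var\{Y(t)|X\}\, g(h_t)]$, is wrong. When matching on $\pi(X)$, the third term of the full-$X$ decomposition, $n^{-1}\sum_i (2T^{(i)}-1)[E\{Y(1-T)|X^{(i)}\} - E\{Y(1-T)|X^{(J(i))}\}]$, is \emph{not} $o_p(n^{-1/2})$: matched pairs are close in $\pi(X)$ but not in $X$, and $E\{Y(t)|X\}$ is not $\pi(X)$-measurable, so the discrepancies $\mu_{c,t}(X) = E\{Y(t)|X\} - E\{Y(t)|\pi(X)\}$ persist and contribute variance at the $n^{-1/2}$ scale. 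The correct limit, which the paper takes from Theorem 5 of \citet{abadie2006} applied to the score $\pi(X)$, is $V_{\pi} + \sum_t E[\sigma_t^2(\pi(X))\, g(h_t)]$ with $V_\pi = \var[E\{Y(1)-Y(0)|\pi(X)\}]$ and $\sigma_t^2(\pi(X)) = \var\{Y(t)|\pi(X)\}$. Your expression is in fact exactly the paper's \emph{intermediate} quantity $V_X + \one_X + \two_X$, so what you have proved is only the second half of the paper's sandwich, $V_X + \one_X + \two_X \geq$ (reduced-covariate limit); the first half, $V_\pi + \one_\pi + \two_\pi \geq V_X + \one_X + \two_X$, is the hard part and does not follow from Jensen alone. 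Indeed $V_X \geq V_\pi$ by the law of total variance — the wrong direction — and must be offset by the inflation $\sigma_t^2(\pi(X)) \geq E\{\sigma_t^2(X)|\pi(X)\}$, which the paper accomplishes via the weighted triangle inequality $-2E(\mu_{c,1}\mu_{c,0}) \leq E\{\mu_{c,1}^2/\pi^*(X)\} + E\{\mu_{c,0}^2\,\pi^*(X)\}$ with $\pi^*(X) = \pi(X)/\{1-\pi(X)\}$. Your equality analysis inherits the error: it yields only the measurability condition $\pi(X) = \pi(X'\beta_t)$ (the paper's condition (c)) and misses $E\{Y(t)|X\} = E\{Y(t)|\pi(X)\}$ (the paper's condition (b)), so under your formula $T \indep X$ alone would appear sufficient for equality in case (i), whereas the theorem correctly also requires $E\{Y(t)|X\} \equiv E\{Y(t)\}$.
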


\begin{proof}
 The form of the variance follows directly from Theorem $5$ of \citet{abadie2006} and the sufficiency of $X'\beta\lo t$ for $Y(t) | X$. Let $\sigma\lo t\udex 2 (R) = \var \{Y(t) | R\}$ for any random element $R$.
To see that $\var (\hat \mu\lo \pi) \geq \var (\hat \mu\lo r)$ for all large $n$, following Theorem $5$ of \citet{abadie2006}, under Assumption \ref{assumption-cmmn-sppt-stronger}, we have
\begin{eqnarray*}
&& n \var (\hat\mu\lo \pi) \rightarrow V \lo {\pi(X)} + \sum\lo {t=0}\udex 1 \left[ E \left\{\frac{\sigma\lo t\udex 2 (\pi (X))} { h (\pi (X), t)}\right\}\right. \nonumber \\
&& \hspace{0.5cm} + \frac{1}{2m}\left.E\left[\left\{\frac{1}{ h(\pi(X),t)} - h (\pi (X), t)\right\} \sigma\lo t\udex 2 (\pi (X))\right] \right].
\end{eqnarray*}
We write $n \var (\hat \mu\lo \pi) = V\lo \pi + \one \lo \pi + \two\lo \pi$ and $n \var (\hat \mu\lo \pi) = V\lo X + \one \lo r + \two\lo r$, and additionally introduce $\one\lo X$ and $\two\lo X$, in which
{\small {
\begin{eqnarray*}
\one\lo \pi &=& E \left\{\frac{\sigma\lo 1\udex 2 (\pi(X))} {\pi (X)} +  \frac{\sigma\lo 0\udex 2 (\pi(X))} {1-\pi (X)} \right\} \\
\two\lo \pi &=& \frac{1}{2m} E\left[\left\{\frac{1}{\pi(X)} - \pi (X)\right\} \sigma\lo 1\udex 2 (\pi(X)) + \left\{\frac{1}{1-\pi(X)} - \{1-\pi (X)\} \right\} \sigma\lo 0 \udex 2 (\pi(X))\right] \\
\one\lo r &=& E \left\{\frac{\sigma\lo 1 \udex 2 (X)} {\pi (X'\beta\lo 1)} +  \frac{\sigma\lo 0 \udex 2 (X)} {1-\pi (X'\beta\lo 0)} \right\} \\
\two\lo r &=&  \frac{1}{2m}E\left[\left\{\frac{1}{\pi(X'\beta\lo 1)} - \pi (X'\beta\lo 1)\right\} \sigma\lo 1 \udex 2 (X)
+ \left\{\frac{1}{1-\pi(X'\beta\lo 0)} - \{1-\pi (X' \beta\lo 0)\} \right\} \sigma\lo 0 \udex 2 (X)\right]. \\
\one\lo X &=& E \left\{\frac{\sigma\lo 1\udex 2 (X)} {\pi (X)} +  \frac{\sigma\lo 0\udex 2 (X)} {1-\pi (X)} \right\} \\
\two\lo X &=&  \frac{1}{2m}E\left[\left\{\frac{1}{\pi(X)} - \pi (X)\right\} \sigma\lo 1 \udex 2 (X) + \left\{\frac{1}{1-\pi(X)} - \{1-\pi (X)\} \right\} \sigma\lo 0\udex 2 (X)\right].
\end{eqnarray*}}}
Then the inequality $n \var (\hat \mu\lo \pi) \geq n \var (\hat \mu\lo r)$ can be implied if
\begin{equation}\label{eq:prf-thm-asymvar}
V\lo \pi + \one \lo \pi \geq V\lo X + \one\lo X, \quad \two \lo \pi \geq \two\lo X, \quad \one\lo X \geq \one\lo r, \quad \two\lo X \geq \two\lo r.
\end{equation}
Let $\mu\lo {c,t}(X) = E\{Y (t) | X\} - E\{Y (t)  | \pi (X)\}$ for $t=0, 1$. By definition,
\begin{eqnarray}\label{eq:prf-thm-asymvar-1}
V\lo X - V\lo \pi &=& \var[E\{Y(1)- Y(0) | X\}] - \var[E\{Y(1) - Y(0) | \pi (X)\} ] \nonumber \\
&=& E\{\mu\lo {c,1} (X)\}\udex 2 + E\{\mu\lo {c,0} (X)\}\udex 2 - 2 E\{\mu\lo {c,1} (X) \mu\lo {c,0} (X) \}.
\end{eqnarray}
Let $\pi\udex *(X) = \pi(X) / \{1 - \pi(X)\}$, the logit function of $\pi(X)$. By the triangle inequality,
\begin{eqnarray*}
- 2 E\{\mu\lo {c,1}(X) \mu\lo {c,0}(X)\} &=& -2 E \left[ [\mu\lo {c,1}(X) \{\pi\udex *(X)\}\udex {-1/2} ] \, [ \mu\lo {c,0} (X) \{\pi\udex * (X)\} \udex {1/2}]\right] \\
&\leq& E\{\mu\lo {c,1} \udex 2 (X) / \pi\udex *(X)\} + E \{\mu\lo {c,0} \udex 2 (X) \pi\udex *(X)\}.
\end{eqnarray*}
By plugging it back to (\ref{eq:prf-thm-asymvar-1}), we have
\begin{equation*}
V\lo X - V\lo \pi \leq E[\mu\lo {c,1}(X)\udex 2 / \pi (X)] + E[\mu\lo {c,0} (X)\udex 2 / \{1- \pi (X)\}]
\end{equation*}
On the other hand, we have
\begin{eqnarray*}
\one\lo \pi - \one \lo X &=& E[\{\sigma\lo 1 \udex 2 (\pi(X)) - \sigma\lo 1\udex 2 (X)\} / \pi(X)] + E[\{\sigma\lo 0 \udex 2 (\pi(X)) - \sigma\lo 0\udex 2 (X)\} / \{1 - \pi(X)\}] \\
&=& E[\mu\lo {c,1} \udex 2 (X) / \pi (X)] + E[\mu\lo {c,0} \udex 2 (X) / \{1- \pi (X)\}].
\end{eqnarray*}
Hence $V\lo X - V\lo \pi \leq \one\lo \pi - \one\lo X$, which implies the first inequality in (\ref{eq:prf-thm-asymvar}). Because $E\{\sigma\lo 1 \udex 2 (X) | \pi(X)\}  = E[\var \{Y(1) | X\} | \pi(X)] \leq \var \{Y(1) | \pi(X)\} = \sigma\lo 1\udex 2 (\pi(X))$, we have
\begin{equation*}
E[\{\pi(X)\inv - \pi (X)\} \sigma\lo 1 (X) ] \leq E[\{\pi(X)\inv - \pi (X)\} \sigma\lo 1 (\pi(X)) ].
\end{equation*}
Similarly, we can show the corresponding inequality for the part when $t=0$ with $1- \pi (X)$ in place of $\pi(X)$. Thus $\two\lo \pi \geq \two\lo X$. Next, let $\phi: \real\udex + \rightarrow \real\udex +$ be that $\phi(x) = x\inv$. Then $\phi$ is a convex function. By Jensen's inequality, we have $\pi(X'\beta\lo 1) \inv = [E\{ \pi(X) | X'\beta\lo 1\}]\inv \leq E\{\pi\inv (X) | X'\beta\lo 1\}$, which, together with that $\sigma\lo 1\udex 2 (X) = \sigma\lo 1\udex 2 (X'\beta\lo 1)$, implies that
\begin{equation*}
E \{\sigma\lo 1 \udex 2 (X) / \pi (X'\beta\lo 1)\} \leq E [\sigma\lo 1 \udex 2 (X'\beta\lo 1) E\{\pi\inv (X) | X'\beta\lo 1\}] = E\{\sigma\lo 1\udex 2 (X) / \pi (X) \}.
\end{equation*}
Similarly, we can show the corresponding inequality for the part when $t=0$ with $1- \pi (X)$ in place of $\pi(X)$. Hence $\one\lo X \geq \one\lo r$. Finally, let $\psi: \real\udex + \rightarrow \real$ be that $\psi(x) = x\inv - x$, then $\psi$ is also a convex function. Thus the fourth inequality in (\ref{eq:prf-thm-asymvar}) can be shown similarly. From the arguments above, the equality between the limits of $n \var (\hat \mu\lo \pi)$ and $n \var (\hat \mu\lo r)$ holds if and only if all the four equalities in (\ref{eq:prf-thm-asymvar}) hold, which means that
\begin{itemize}
\item[(a)] $\mu\lo {c,1}(X) {\pi\udex *(X)}\udex {-1/2} = - \mu\lo {c,0} (X) {\pi\udex *(X)}\udex {1/2}$,
\item[(b)] $E\{Y(t) | X\} = E\{ Y(t) |\pi(X)\}$ for $t=0, 1$,
\item[(c)] $\pi(X) = \pi(X'\beta\lo 1) = \pi(X'\beta\lo 0)$.
\end{itemize}
Note that (b) is equivalent to that $\mu\lo {c,t} (X) = 0$ for $t=0, 1$, which implies (a).
If $\csz = \cso$, then (c) is equivalent to that $\pi(X) = \pi (X'\beta\lo 1)$. Otherwise, it means that $\pi(X)$ is a constant, which, together with (b), indicates that $E\{Y(t) | X\} \equiv E\{Y(t)\}$ for $t=0, 1$. This completes the proof.
\end{proof}

Case (i) in this theorem means that the covariates are redundant to both the propensity score and the outcome regressions. That is, the treatment is assigned completely at random, and both outcomes have constant mean over the subjects. However,  the covariates may affect the other aspects of the outcomes, such as the variance, so that the central subspaces do not vanish. Case (ii) in the theorem means that both the treatment assignment and the outcomes are affected by the same linear combination of the covariates, and furthermore, that the propensity score is sufficient for the outcome regressions.

Because both cases can be easily violated in practice, the theorem justifies that when the reduced covariates from SDR are univariate and known a priori, they are likely to outperform the propensity score, in terms of producing an asymptotically stabler estimator in the subsequent matching.

The underlying reason for the superiority of the reduced covariates from SDR, is their sufficiency for modeling the outcomes. In this sense, they are comparable to the prognostic score (\citealp{hansen2008}), defined as $E\{Y(0) | X\}$, which is also sufficient subject to the conditional independence between $Y(1)$ and $X$ given $Y(0)$.
Following the arguments in Theorem \ref{thm:asym-bias} and Theorem \ref{thm:asym-var}, this prognostic score can further outperform the proposed reduced covariates in reducing both the bias and the variance of the matching estimator. However, the additional assumption for the prognostic score involves the joint distribution of both potential outcomes, and its estimation usually relies on parametric models, neither of which can be readily justified in practice.

The reduced covariates from SDR are also comparable to the aforementioned active set in \citet{de2011}, in the sense that they both  pre-assume a sparsity structure in the data.
As mentioned before, the cardinality of the active set is always greater than or equal to the dimensions of the central subspaces, which, by Theorem \ref{thm:asym-bias}, makes the subsequent matching potentially less effective. 

To further reduce the size, \citet{de2011} used the above-mentioned active set in place of the original covariates to further conduct variable selection in modeling the treatment assignment, and to remove the redundant covariates. The cardinality of the reduced set is possibly smaller than the dimensions of the central subspaces. However, as the reduced set is no longer sufficient for modeling the outcomes, Theorem \ref{thm:asym-var} suggests that it may enlarge the  variance of the subsequent matching estimator. Nonetheless, this procedure motivates us to develop a sequential SDR procedure, which can be useful when there is need to further reduce the working central subspaces. For continuity of context, we leave more details to the Discussion.

\section{Sample estimation and implementation}

We now present the details in using SIR to estimate the central subspaces, and then using the resulting reduced covariates in matching to estimate the causal effects.

Following the notations in Section 2,
for standardized covariates with zero mean and identity covariance matrix, SIR estimates the central subspace through an intermediate parameter $M = E\{E(X |W) E(X' | W)\}$, called the candidate matrix. Given a matrix estimate $\hat M$, the central subspace is estimated as the linear space spanned by the leading eigenvectors of $\hat M$. Because no model is assumed on $E(X | W)$, to construct $\hat M$, \citet{li1991} proposed the slicing strategy, which divides the sample into $H$ slices according to the similarity in $W$. $E(X|W)$ is then estimated as the sample mean of $X$ within each slice, and $\hat M$ is given by the usual sample moments.

Given the reduced covariates, we are free to use any existing matching method. As an illustration, here we use the Abadie-Imbens estimator with the Mahalanobis distance (\citealp{rosenbaum1985}) to measure the similarity of subjects.
For a random vector $Z$ with the sample values $\{Z\sam i, i=1,\ldots,n\}$, let $\hat \Sigma\lo Z$ be its sample covariance matrix. The Mahalanobis distance between $Z\sam i$ and $Z\sam j$ is
\begin{equation}\label{eq:mahalanobis}
D(Z\sam i, Z\sam j)=\{(Z\sam i - Z \sam j)' \hat \Sigma\lo Z\inv (Z\sam i - Z \sam j)\}\udex {1/2}.
\end{equation}

When $Z$ is used for matching, for each subject $i$, the matched set $J\lo m (i)$ in (\ref{eq:impute}) contains the $m$ subjects in the other treatment group whose $Z$ values are closest to $Z\sam i$ in terms of Mahalanobis distance. After calculating the imputed values by (\ref{eq:impute}), we estimate $ACE$ by

{\small {
\begin{equation}\label{eq:est-ace}
\widehat{ACE}=\left[\tsum_{T\sam i = 0} \{\hat {Y}\sam i (1)- {Y}\sam i (0)\} + \tsum_{T\sam i = 1} \{{Y}\sam i (1)- \hat Y\sam i (0)\}\right] / n.
\end{equation}}}

In summary, the proposed matching approach can be implemented in the following steps:

\begin{enumerate}
\item [Step 1.] SDR: estimate the reduced covariates in the control group as follows.
 \begin{itemize}
 \item Standardize $X$ to be $\hat \Sigma\lo {X | T=0} \udex {-1/2} \{X - \hat E (X| T=0)\}$, where $\hat E (X | T=0)$ and $\hat \Sigma\lo {X|T=0}$ are the sample mean and the sample covariance matrix of $X$ in the control group, respectively. For simplicity, we still denote the standardized variable by $X$, if no ambiguity is caused.
 \item For $j=1,\ldots, H$, let $q\lo j$ be the $(j/H)$th sample quantile of $\{Y \sam i (0): T\sam i = 0\}$. Construct the $j$th slice as $ S\lo j (Y(0)) = I (q\lo {j-1} < Y (0) \leq q\lo {j})$, where $I(\cdot)$ is the indicator function and $q\lo 0$ is a constant that is less than any observed $Y\sam i (0)$. Estimate $E(X | S\lo j (Y(0))=1, T\sam i = 0)$ by
 \begin{equation*}
 \hat \mu\lo {X, j} = \tsum_ {T^ {(i)} = 0} S\lo j (Y\sam i (0) ) X\sam i \, / \, \tsum _ {T^ {(i)} = 0} S\lo j (Y\sam i (0)).
 \end{equation*}
 \item Estimate $M\lo 0= E[ E\{X | Y(0), T=0\} E\{X' | Y(0), T = 0\}]$ by $\hat M\lo 0 = \tsum\lo {j=1}\udex H \hat \mu\lo {X, j} \hat \mu\lo {X,j}' / H$, and use the sequential tests to determine the rank of $M\lo 0$. Let $\hat r (0)$ be the selected rank and $\hat \beta\lo 0$ be the $\hat r(0)$ eigenvectors of $\hat M\lo 0$ with the largest eigenvalues.
\end{itemize}

\item [Step 2.] Match each subject in the treatment group with those in the control group: for subject $i$, let ${\mathcal D}\lo i = \{D(X\udex {(i) \prime}\hat \beta\lo 0, X\udex {(j)\prime} \hat \beta\lo 0): T\sam j =0\}$, where $D$ is the Mahalanobis distance in (\ref{eq:mahalanobis}), and $J\lo m (i)$ be the set that corresponds to the $m$ smallest values in ${\mathcal D}\lo i$. Use (\ref{eq:impute}) to compute $\hat Y\sam i (0)$.

\item [Step 3.] Conduct Step 1 for the treatment group, and conduct Step 2 to compute $\hat Y\sam i (1)$ for each subject $i$ in the control group.

\item [Step 4.] Estimate $ACE$ by (\ref{eq:est-ace}).
\end{enumerate}

In practice, the number of slices $H$ is usually fixed at a small constant, such as $5$ or $10$. Although it imposes the constraint that the reduced covariates derived from SIR are at most $(H-1)$-dimensional, the issue can be easily addressed: if one finds that the estimated reduced dimension $\hat r(0)$ or $\hat r (1)$ coincides with $H-1$, then one can raise the value of $H$ to see whether more directions in the central subspace can be recovered.

When the causal effect of interest is $ACET$, the implementation can be adjusted by skipping step 3 and using the results in step 2 to compute
\begin{equation*}
\widehat{ACET}= \tsum\lo {i=1}\udex n T\sam i \{Y\sam i  (1) - \hat {Y}\sam i (0)\} / \tsum\lo {i=1}\udex n T\sam i.
\end{equation*}

\section{Simulation Studies}

In this section, we illustrate the effectiveness of SDR for matching, by applying the reduced covariates to the Abadie-Imbens estimator. For comparison, we also apply the estimated propensity score derived from CBPS \citep{imai2014}, the true propensity score, the original covariates, and the active set of covariates \citep{de2011} to the same estimator, respectively. The active set is assumed  known a priori. To assess the effect of SDR estimation in the proposed method, we also apply the reduced covariates induced by the true central subspaces, which is oracle, just like the active set of covariates. In addition, we include genetic matching on the original covariates in the comparison, for the reason that the method has the potential to detect and use the sparsity structure in the data.

Because the linearity condition (\ref{assumption-linearity}) required for the consistency of SIR in each treatment group can be violated in practice, we do not treat the condition as granted. Instead, we let the covariates be elliptically distributed conditional on the treatment assignment, elliptically distributed marginally, and  composed of a mixture of continuous and discrete components, sequentially in this section.

\subsection{Case 1: Elliptical covariates conditional on treatment}

In the first scenario, we generate $T$ marginally from a Bernoulli distribution with mean $0.5$, and then generate $X | T=t$ from $N(\mu\lo t, \Sigma\lo t)$, where the $(i,j)$th entry of $\Sigma\lo t$ is $\delta\lo t \udex {|i - j|}$. We set $\mu\lo 0$ at the origin of $\real\udex p$ and $\delta\lo 0$ at $0.2$, and let $\mu_1$ and $\delta_1$ vary with the models. When $\delta\lo 1 = 0.2$, the distribution $X|T=t$ depends on $t$ only through the conditional mean $\mu_t$, which is the case discussed in \citet{rubin1992}. The true propensity scores can be easily calculated using Bayes' Theorem.

We set $n=500$, $p=10$, and study the following four models:
\begin{itemize}
\item[\one.] $Y (t) = (t + 2) (X\lo 1 + 1.5)\udex 2 + t + \varepsilon$, $\mu\lo {1,k} = \mu\lo 0$, $\delta\lo 1 = 0.5$;
\item[\two.] $Y (t) = 2 \sin\{0.5 (X\lo 1 + X\lo 2 - X\lo 3)\} + t + \varepsilon$, $\mu\lo {1,k} = I(k \leq 3) / 3$, $\delta\lo 1 = 0.5$;
\item[\three.] $Y (t) = X\lo 1 + X\lo 2 + X\lo 3 + t(X\lo 4 + X\lo 5) + \varepsilon$, $\mu\lo {1,k} \equiv p\udex {-1/2}$, $\delta\lo 1 = 0.2$;
\item[\four.] $Y (t) = 3 \sin \{X'\Sigma\lo 1\inv (\mu\lo 1 - \mu\lo 0) / 3\} + t \, \{X'\Sigma\lo 1\inv (\mu\lo 1 - \mu\lo 0)\}\udex 2 / 3 + \varepsilon$, $\mu\lo {1,k} = c\lo 1 k$, $\delta\lo 1 = 0.2$.
\end{itemize}
Here $\epsilon$ is an independent error generated from $N(0, 0.5\udex 2)$, $\mu\lo {1,k}$ is the $k$th component of $\mu\lo 1$ for $k=1,\ldots, p$, and $c\lo 1$ is the constant that makes $\|\mu\lo 1\|\lo 2 = 0.5$ in Model \four. Obviously, the ignorability assumption (\ref{assumption-ig}) is satisfied in all the models.

In Model \two, the causal effects are constant among all the subjects. In Models \one \ and \two, $\delta\lo 1$ differs from $\delta\lo 0$, so that the correlation structure of the covariates varies between treatment groups. From the view of SDR, in Model \three,  the two central subspaces differ,  and in all the models the central subspaces are one-dimensional. Thus, by Theorem \ref{thm:asym-var}, the reduced covariate from oracle SDR should outperform the true propensity score in Models \one, \two\ and \three, in the sense that the induced Abadie-Imbens estimator has a smaller asymptotic variance. In Model \four, the true propensity score is an invertible function of $X'\Sigma\lo 1\inv (\mu\lo 1 - \mu\lo 0)$, which is also the reduced covariate from oracle SDR. Therefore, the model falls into case (ii) of Theorem \ref{thm:asym-var} indicating that  the two balancing scores are equally good for matching asymptotically.

The strong common support condition is satisfied at the population level in all four models. However, because the dimension $p$ is moderately large, the covariates have a relatively sparse support at the sample level, which makes the actual realization of the condition questionable. In contrast, because the reduced covariate from SDR is univariate in all the models, the issue is less severe. In fact, depending on how different the central subspaces are from $\mu\lo 1 - \mu\lo 0$, $\Sigma\lo 0$ and $\Sigma\lo 1$, the weak common support condition is realized almost perfectly in Models \one \ and \two \ and to a reasonable extent in the other models. 


We set $m=1$, and estimate both $ACE$ and $ACET$ in each model. The results for $ACE$ are summarized in Table \ref{table:case1-ace} and the results for $ACET$ are displayed in Table 1 in the Appendix B (Supplementary Material), respectively, which include the evaluation of the bias and the standard deviation of each estimator, based on $1000$ independent runs. For readers' convenience, we also record the root mean square error (RMSE) of each estimator, which can be calculated from the bias and the standard deviation, subject to rounding errors.

\begin{table}
\center
\begin{threeparttable}
\caption{\label{table:case1-ace} Simulation Results for $ACE$ in Case 1}
\begin{tabular}{ccccc}								
		& \one	&	\two	&	\three	&	\four			\\
\hline
&\multicolumn{4}{c}{BIAS}\\						
Ambient	&	-0.1460	&	0.0929	&	0.4764	&	0.0571	\\
Estimated PS	&	-0.0107	&	0.0317	&	0.0415	&	0.0154\\
True PS	&	0.0102	&	0.0073	&	0.0211	&	0.0013	\\
Genetic Matching & -0.1572&	0.0228&	0.3116&	0.0222\\
Active Set (Oracle)	&	-0.0138	&	0.0172	&	0.2005	&	0.0571	\\
SDR (Oracle)	&	-0.0138	&	0.0006	&	0.0164	&	0.0014	\\
Proposed	&	-0.0065	&	0.0029	&	0.0220	&	0.0016	\\
&\multicolumn{4}{c}{SD}\\						
Ambient	&	0.4258	&	0.0849	&	0.1388	&	0.0549	\\
Estimated PS	&	0.6463	&	0.1051	&	0.1751	&	0.0567	\\
True PS	&	1.1884	&	0.1856	&	0.2119	&	0.0537	\\
Genetic Matching & 0.3308&	0.0637&	0.1283	&0.0554\\
Active Set (Oracle)	&	0.1583	&	0.0559	&	0.1054	&	0.0548	\\
SDR (Oracle)	&	0.1583	&	0.0535	&	0.0917	&	0.0537	\\
Proposed	&	0.1788	&	0.0551	&	0.1011	&	0.0525	\\
&\multicolumn{4}{c}{RMSE}\\													
Ambient	&	0.4499	&	0.1258	&	0.4962	&	0.0792	\\
Estimated PS	&	0.6460	&	0.1097	&	0.1798	&	0.0588	\\
True PS	&	1.1878	&	0.1856	&	0.2129	&	0.0537	\\
Genetic Matching & 0.3661&	0.0677&	0.3370	& 0.0597\\
Active Set (Oracle)	&	0.1588	&	0.0585	&	0.2265	&	0.0792	\\
SDR (Oracle)	&	0.1588	&	0.0535	&	0.0931	&	0.0537	\\
Proposed	&	0.1789	&	0.0552	&	0.1034	&	0.0525	\\
\hline
\end{tabular}
\begin{tablenotes}
\small \item Here, ``Ambient" refers to the Abadie-Imbens estimator based on the Mahalonobis distance calculated from the original covariates;`` Estimated PS" is the same estimator based on the estimated propensity scores from CBPS; ``True PS" is based on the true propensity score; ``Active Set (Oracle)" is based on  the Mahalonobis distance calculated from the true active set of covariates; ``SDR (Oracle)" is the proposed approach with true reduced covariates; and  ``Proposed" is the proposed approach with reduced covariates estimated from SIR.
\end{tablenotes}
\end{threeparttable}
\end{table}

From these tables, the reduced covariates from SDR estimation substantially outperform the original covariates for matching, in all the aspects in Models \one \ to \three\ and in terms of reduced bias in Model \four. They are comparable with the active set of covariates in Models \one \ and \two, and superior to the latter on bias reduction in Models \three\ and \four. The improvement is due to the relaxation of the ``curse of dimensionality", for the reason that all these three balancing scores are sufficient for the outcome regressions, and that the cardinality of the active set is small in Models \one\ and \two, and is relatively large in Models \three\ and \four. The genetic matching is generally superior to the Abadie-Imbens estimator based on the original covariates, but is inferior to the same estimator based on the reduced covariates from SDR estimation. This is a sign that at the current sample size, the sparsity structure of the data can be detected partially, but not completely, by genetic matching.

Interestingly, the true propensity score is slightly inferior to its estimator in terms of a larger variance in the resulting matching. The phenomenon has also been noticed by multiple authors, see \cite{lunceford2004}. Compared to the estimated propensity score, the reduced covariates from SDR estimation makes matching more stable in Models \one, \two, and \three, and equally consistent in Model \four, which complies with the theoretical anticipation from Theorem \ref{thm:asym-var}. Compared to the oracle SDR, the SDR estimation does cost a larger variation in matching, although the price is nearly negligible.

\subsection{Case 2: Elliptical covariates in the merged sample}

When applying SIR to data with continuous but non-elliptically distributed covariates, it has been a common practice to transform each covariate to be univariate normally distributed, and assume that the transformed covariates have a jointly normal distribution. In causal inference, if the researcher believes that the covariates affect the outcome in similar patterns before and after the treatment assignment,  it will be desired to transform the covariates uniformly across the treatment groups, which corresponds to an elliptically distributed $X$ but not necessarily elliptically distributed $X|T$. Consequently, the linearity condition (\ref{assumption-linearity}) on $X|T$ can be violated. 

To study the performance of the proposed approach in this case, we still set $n=500$, $p=10$, and generate $X$ from $N({\mathbf 0}, I\lo p)$. Because the outcome regressions are similar in different treatment groups in Models \one \ and \two, given $X$, we generate the outcomes from these two models, and use the corresponding propensity score to generate $T$ from an independent Bernoulli distribution. The new models satisfy the ignorability assumption (\ref{assumption-ig}), and differ from their counterparts in the previous subsection only in the marginal distribution $X$. We label them as Model \one$\udex *$ and Model \two$\udex *$, respectively. The results from $1000$ independent runs are summarized in Table \ref{table:case2-acet}.

\begin{table}
\center
\caption{\label{table:case2-acet} Simulation Results for $ACE$ and $ACET$ in Case 2}
\begin{tabular}{ccccc}	
&\multicolumn{2}{c}{$ACE$}&\multicolumn{2}{c}{$ACET$}\\	
\hline						
		&I$\udex *$	&	\two$\udex *$	&	I$\udex *$	&	\two$\udex *$		\\
\hline
&\multicolumn{4}{c}{BIAS}\\						
Ambient	&		-0.6805	&	0.0279	& 0.3132	&	0.0293	\\
Estimated PS		&	-0.4273	&	-0.0609	&	-0.3044	&	0.0059\\
True PS		&	-0.0209	&	-0.0072	&	0.0166	&	0.0047\\
Genetic Matching&	-0.6165&	0.0125& 0.0765&	0.0148\\
Active Set (Oracle)		&	-0.0961	&	0.0036	&	0.0113	&	0.0056\\
SDR (Oracle)		&	-0.0961	&	-0.0003	&	0.0113	&	0.0002\\
Proposed		&	-0.0761	&	0.0021	&	0.0187	&	0.0032\\
&\multicolumn{4}{c}{SD}\\						
Ambient		&	0.5151	&	0.0957	&	0.5796	&	0.1109\\
Estimated PS		&	0.8401	&	0.1312	&	0.7713	&	0.1419\\
True PS		&	2.1383	&	0.2997	&	1.1095	&	0.2066\\
Genetic Matching&	0.4368&	0.0763& 0.4526&	0.0852\\
Active Set (Oracle)		&	0.1822	&	0.0645	&	0.2712	&	0.0732\\
SDR (Oracle)		&	0.1822	&	0.0615	&	0.2712	&	0.0669\\
Proposed		&	0.2202	&	0.0653	&	0.2896	&	0.0729\\
&\multicolumn{4}{c}{RMSE}\\													
Ambient		&	0.8533	&	0.0996	&	0.6586	&	0.1146\\
Estimated PS		&	0.9421	&	0.1446&	0.8288	&	0.1419	\\
True PS		&	2.1373	&	0.2996	&	1.1091	&	0.2066\\
Genetic Matching &	0.6572&	0.0623& 0.3596&	0.0681\\
Active Set (Oracle)		&	0.2059	&	0.0645 &	0.2713	&	0.0734	\\
SDR (Oracle)	&	0.2059	&	0.0615		&	0.2713	&	0.0669\\
Proposed		&	0.2329	&	0.0653	&	0.2901	&	0.0729\\
\hline
\end{tabular}
\end{table}

The results are similar to the previous subsection. The reduced covariates from SDR estimation still outperform the original covariates and the propensity score, no matter whether the latter is estimated or known a priori. They are generally comparable to the reduced covariates from oracle SDR and the active set of covariates, indicating that the price of estimating the central subspaces is nearly negligible, and that the dimensionality of the balancing score does not harm the accuracy of matching when it is less than four, as expected in Theorem \ref{thm:asym-bias}.

\subsection{Case 3: Covariates that contain discrete components}

As mentioned in Section $2$, when $X$ is non-elliptically distributed, the linearity condition can be approximately satisfied, as long as the dimension $p$ is moderately large. Consequently,
the proposed matching approach can still be reasonably effective in finite samples. In this subsection, we examine its performance when $X$ is a mixture of elliptically distributed covariates and binary covariates.

The simulation setup follows \cite{lee2010}.  First, we generate ten covariates $(X\lo 1,\ldots, X\lo {10})$, in which $(X\lo 1, \ldots, X\lo 3)$ are only associated with $Y$, $(X\lo 4, \ldots, X\lo 6)$ are only associated with $T$, and $(X\lo 7,\ldots, X\lo {10}$) are associated with both $Y$ and $T$. Six covariates, $(X\lo 1, X\lo 3, X\lo 5, X\lo 6, X\lo 8, X\lo 9)$, have Bernoulli distribution with mean 0.5 marginally, and the others follow standardized normal distribution marginally. The correlation within each pair of covariates is zero, except that:
\begin{equation*}
corr(X\lo 1, X\lo 5)=corr(X\lo 3, X\lo 8)=0.2, \,\,corr (X\lo 2, X\lo 6) = corr (X\lo 4, X\lo 9)=0.9.
\end{equation*}
$T$ is generated from the conditional distribution $T|X$, in which the propensity score satisfies:
\begin{equation}\label{logit}
\logit\{\pi (X)\}= \alpha ' g (X).
\end{equation}
The observed outcome $Y (T)$ is generated from the linear model
\begin{equation*}
Y (T) =\omega 'X -0.4  T + \epsilon,
\end{equation*}
where $\epsilon$ follows $N(0,0.1^2)$. The function $g$ and the coefficient vectors $\alpha$ and $\omega$ are specified in the each of scenarios listed in the following. These scenarios differ in the degree of linearity and additivity in the propensity score (\ref{logit}). For more details, see \cite{lee2010}.
\begin{description}
\item[A:] Additivity and linearity (main effects only);
\item[B:] Mild non-linearity (one quadratic term);
\item[C:] Moderate non-linearity (three quadratic terms);
\item[D:] Mild non-additivity (four two-way interaction terms);
\item[E:] Mild non-additivity and non-linearity (three two way interaction terms and one quadratic term);
\item[F:] Moderate non-additivity (ten two-way interaction terms);
\item[G:] Moderate non-additivity and non-linearity (ten two-way interaction terms and three quadratic terms).
\end{description}

Again, $1000$ independent samples are generated, and the Abadie-Imbens estimators based on different balancing scores are used to estimate the causal effects. The results for $ACE$ are summarized in Table \ref{table:case3-ace} and the results for $ACET$ are displayed in Table 2 in the Appendix B (Supplementary Material). Same as before, the proposed approach yields smaller
bias and variance than the existing methods in all the scenarios, and is only slightly worse than that based on the oracle SDR.

\begin{table}
\caption{\label{table:case3-ace} Simulation Results for $ACE$ in Case 3}

\begin{tabular}{cccccccc}
&		A	&	B	&	C	&	D	&	E	&	F	&	G	\\
\hline
&\multicolumn{7}{c}{BIAS}\\						
Ambient	&	0.0551	&	0.0527	&	0.0508	&	0.0625	&	0.0586	&	0.0654	&	0.0570	\\
Estimated PS	&	0.0075	&	0.0041	&	0.0021	&	0.0073	&	0.0124	&	0.0097	&	0.0149	\\
True PS	&	0.0038	&	0.0020	&	0.0078	&	0.0046	&	0.0064	&	0.0041	&	0.0070	\\
Genetic Matching& 0.0343&0.0336&0.0246&0.0371&0.0353&0.0371&0.0321\\
Active Set (Oracle)	&	0.0247	&	0.0240	&	0.0208	&	0.0295	&	0.0287	&	0.0328	&	0.0247	\\
SDR (Oracle)	&	0.0023	&	0.0019	&	0.0025	&	0.0027	&	0.0025	&	0.0025	&	0.0023	\\
Proposed	&	0.0035	&	0.0037	&	0.0043	&	0.0050	&	0.0049	&	0.0055	&	0.0034	\\
&\multicolumn{7}{c}{SD}\\	
Ambient	&	0.0420	&	0.0431	&	0.0427	&	0.0445	&	0.0447	&	0.0440	&	0.0450	\\
Estimated PS	&	0.0692	&	0.0721	&	0.0661	&	0.0757	&	0.0745	&	0.0738	&	0.0731	\\
True PS &	0.0943	&	0.0950	&	0.1093	&	0.0983	&	0.1015	&	0.1004	&	0.1177	\\
Genetic Matching&0.0596&0.0590&0.0558&0.0600&0.0619&0.0574&0.0521\\
Active Set (Oracle)	&	0.0587	&	0.0593	&	0.0587	&	0.0601	&	0.0605	&	0.0625	&	0.0636	\\
SDR (Oracle)	&	0.0327	&	0.0335	&	0.0329	&	0.0331	&	0.0337	&	0.0334	&	0.0339	\\
Proposed	&	0.0365	&	0.0355	&	0.0356	&	0.0374	&	0.0372	&	0.0369	&	0.0357	\\
&\multicolumn{7}{c}{RMSE}\\		
Ambient	&	0.0693	&	0.0681	&	0.0663	&	0.0767	&	0.0737	&	0.0788	&	0.0726	\\
Estimated PS	&	0.0695	&	0.0722	&	0.0661	&	0.0761	&	0.0755	&	0.0744	&	0.0745	\\
True PS	&	0.0943	&	0.0950	&	0.1096	&	0.0984	&	0.1016	&	0.1004	&	0.1178	\\
Genetic Matching&0.0687&0.0679&0.0609&0.0705&0.0712&0.0683&0.0612\\
Active Set (Oracle)	&	0.0637	&	0.0640	&	0.0623	&	0.0669	&	0.0670	&	0.0705	&	0.0682	\\
SDR (Oracle)	&	0.0328	&	0.0336	&	0.0329	&	0.0331	&	0.0338	&	0.0334	&	0.0340	\\
Proposed	&	0.0366	&	0.0356	&	0.0359	&	0.0377	&	0.0375	&	0.0373	&	0.0358	\\
\hline
\end{tabular}
\end{table}

\section{Data Application}

We now illustrate the proposed methodology using a well-known dataset: LaLonde dataset. This dataset has been analyzed by \cite{lalonde1986} and \cite{dehejia1999} to evaluate the causal effect of a labor training program called National Supported Work (NSW) Demonstration on  earnings for job-seekers, who had economic and social problems before the enrollment in the program. In this study, the outcome is the individuals' earnings in 1978 (Ee78). The treatment variable is the indicator for the enrollment in the labor training program (Treat). The ten potential confounders are: age (Age), years of schooling (Educ), indicator for Blacks (Black), indicator for Hispanics (Hisp),  indicator for being married (Married), indicator for high school diploma (Nodegr), real earnings in 1974 (Re74), real earnings in 1975 (Re75), indicator variable for earnings in 1974 being zero (U74), and indicator variable for earnings in 1975 being zero (U75). Here we consider the subset of the original dataset referred as ``CPS-3" by  Dehejia and Wahba (1999), which contains 185 subjects who participated in NSW and 429 controls who  did not participate in NSW.

Since the focus  is to examine the impact of the labor training program on postintervention earnings for those who are eligible for the  program, the parameter of interest is $ACET$. \cite{dehejia1999} suggested a one-to-one matching  approach  with replacement, based on the estimated propensity score from parametric models. To apply the proposed matching approach to the dataset, we apply SIR in the control group to obtain the reduced covariates. The sequential  test for SIR (see the Appendix A) indicates that the  dimension of the reduced covariates is two. A scatterplot between each  reduced covariate and the outcome in the control group is shown in Figure \ref{dr}. From these plots, both reduced covariates clearly affect the outcome marginally, suggesting that two is a reasonable choice for the reduced dimension.

To check  the quality of matching, we draw the boxplot for each reduced covariate, grouped by the treatment assignment. For comparison, we also draw the boxplots for the estimated propensity scores from the three parametric models (See Appendix C in the Supplementary Material)  used in \cite{dehejia1999}. From the upper panel of  Figure~\ref{common},  when any of the estimated propensity scores is used,  the majority of the treated subjects will be matched with a small number of extreme outliers in the control group, which result in  loss of power in the subsequent causal effect estimation. Besides, the sparsity of the empirical support of the outliers will also cause non-negligible difference within the matched pairs, and lead to large bias in the estimated casual effect. By contrast, from the lower panel of Figure~\ref{common}, the issue is much less severe when the reduced covariates from SIR are used for matching.

\begin{figure}
\center
\includegraphics[scale=0.5, width=8cm]{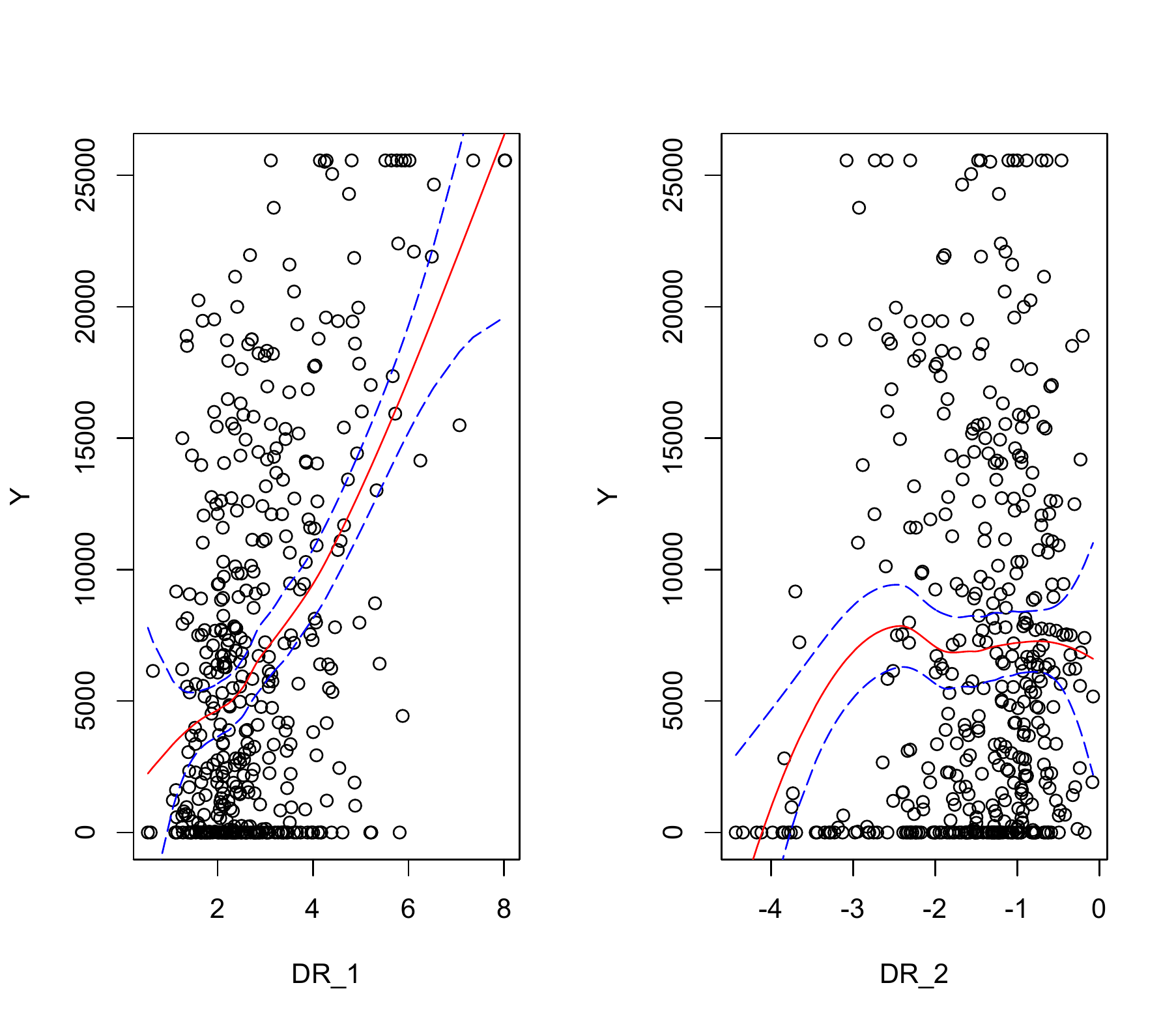}
\caption{\label{dr} Scatterplots of the response variable versus the first two reduced covariates in the control group.}
\end{figure}

\begin{figure}[h!]
\center
\includegraphics[scale=0.6]{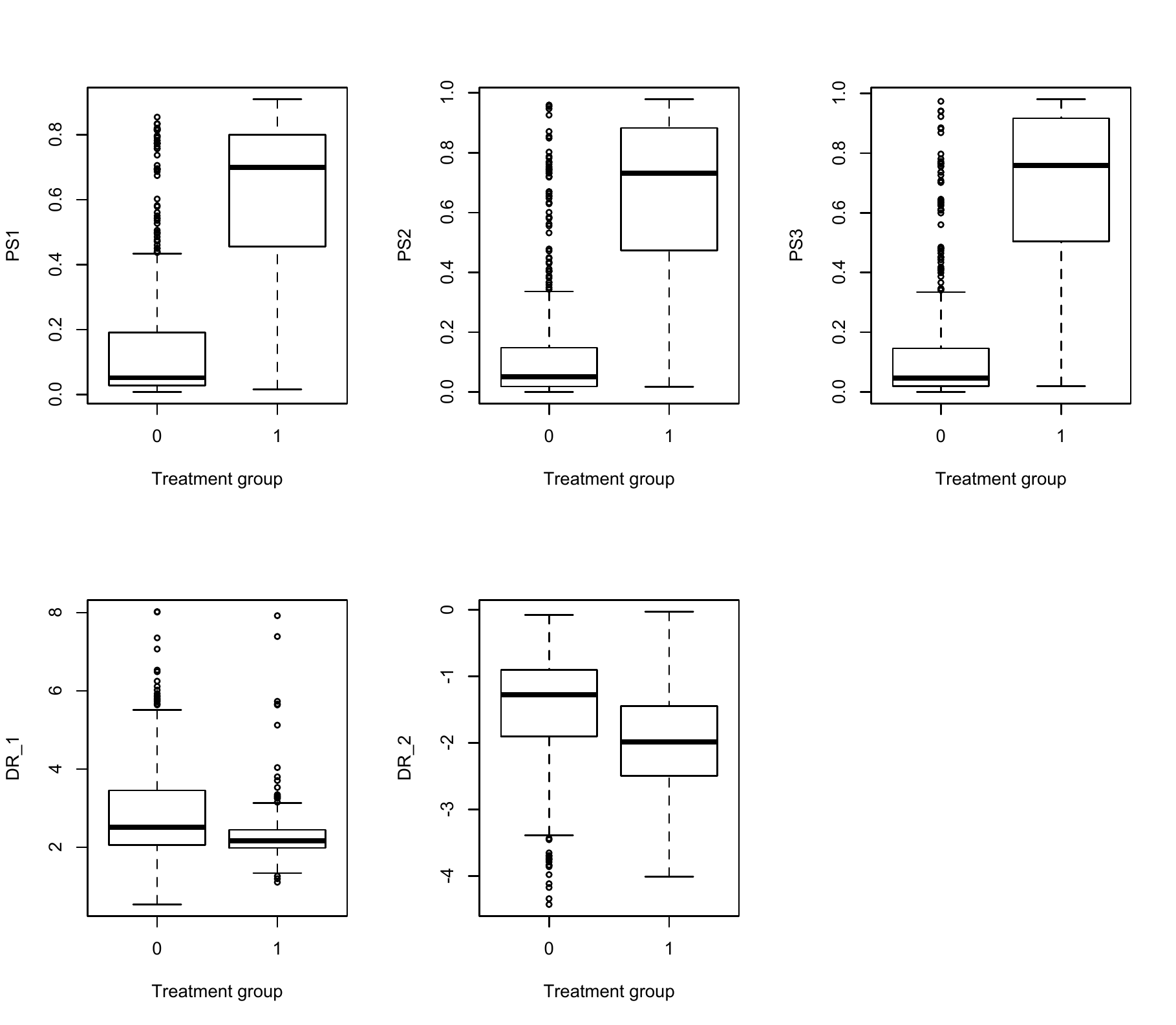}
\caption{\label{common} Boxplot of the propensity scores/reduced covariates between the treatment group and the control group. The upper panel is the boxplot of the estimated propensity scores from three different propensity score models (PS1--PS3 displayed in the Appendix C of the Supplementary Material) and the lower panel is the boxplot of the first two reduced covariates (DR\_1 and DR\_2) obtained from SIR. }
\end{figure}

We further plot the histogram of the two reduced covariates (Figure~\ref{common3}) and the estimated propensity scores (Figure~\ref{common2})  between the treatment and the control group.  For propensity score-based approaches, there are a few bins in which the control group is much smaller than the treatment group, which means it will be hard to find matches for the treated subjects in those bins. However, for the proposed approach, the control group is almost always larger than  the treatment group, which indicates there is enough overlapping for the estimation of the causal effect to be reliable.

 The proposed method estimates $ACET$ to be 205 suggesting that the job-seeking program increases the individual earnings at 1978.  This is consistent with the conclusion from an experimental study in \citet{dehejia1999}. In contrast, if we apply the Abadie-Imbens estimator based on the original covariates or the estimated propensity scores, the Abadie-Imbens estimator estimates $ACET$ to be -361 (original covariates), -835 (PS1), -528 (PS2), -304 (PS3), respectively, which leads to the opposite conclusion.

\begin{figure}[h!]
\center
\includegraphics[scale=0.6]{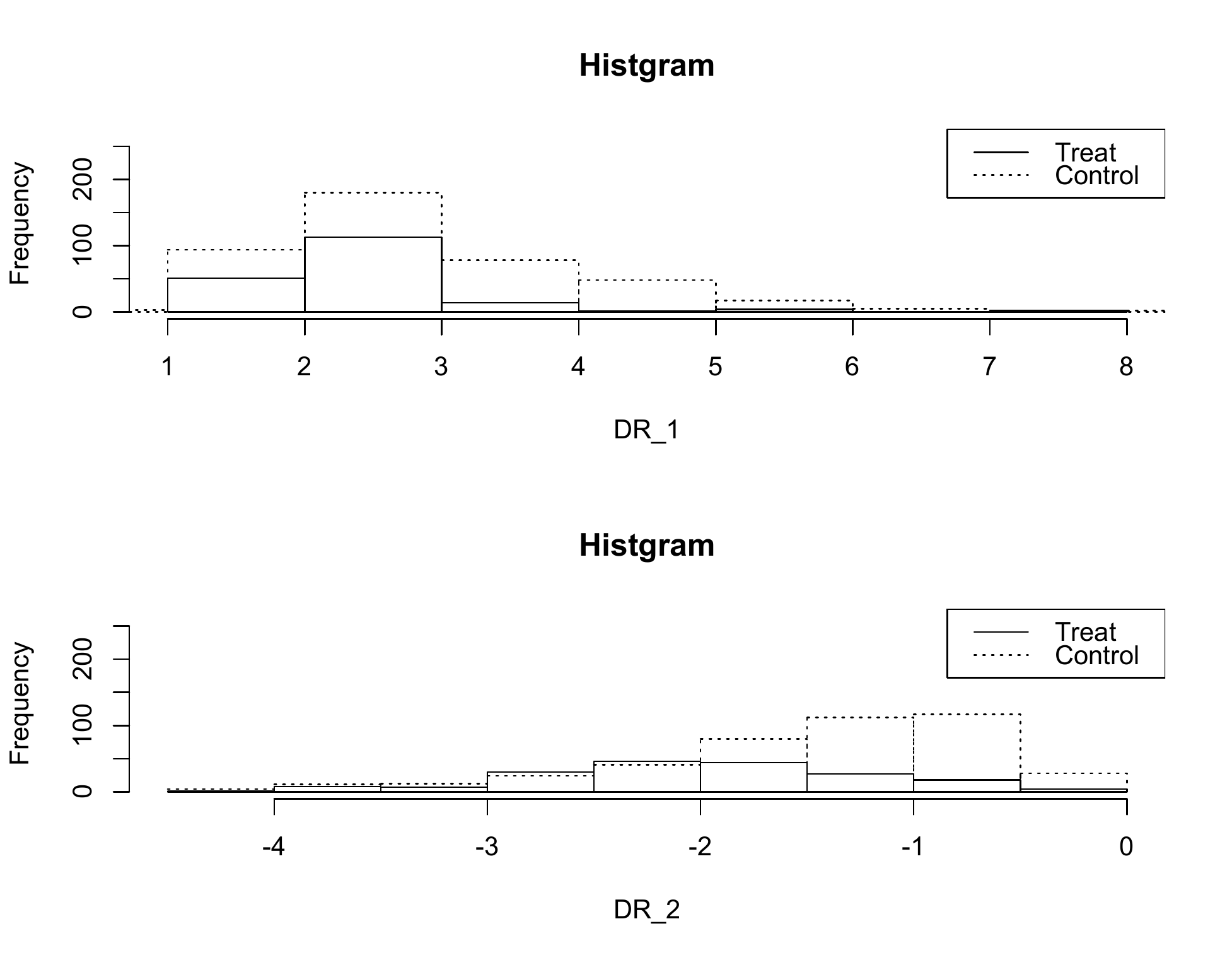}
\caption{\label{common3} Histograms of the two reduced covariates (DR\_1 and DR\_2) between the treatment group and the control group.}
\end{figure}

\begin{figure}[h!]
\center
\includegraphics[scale=0.6]{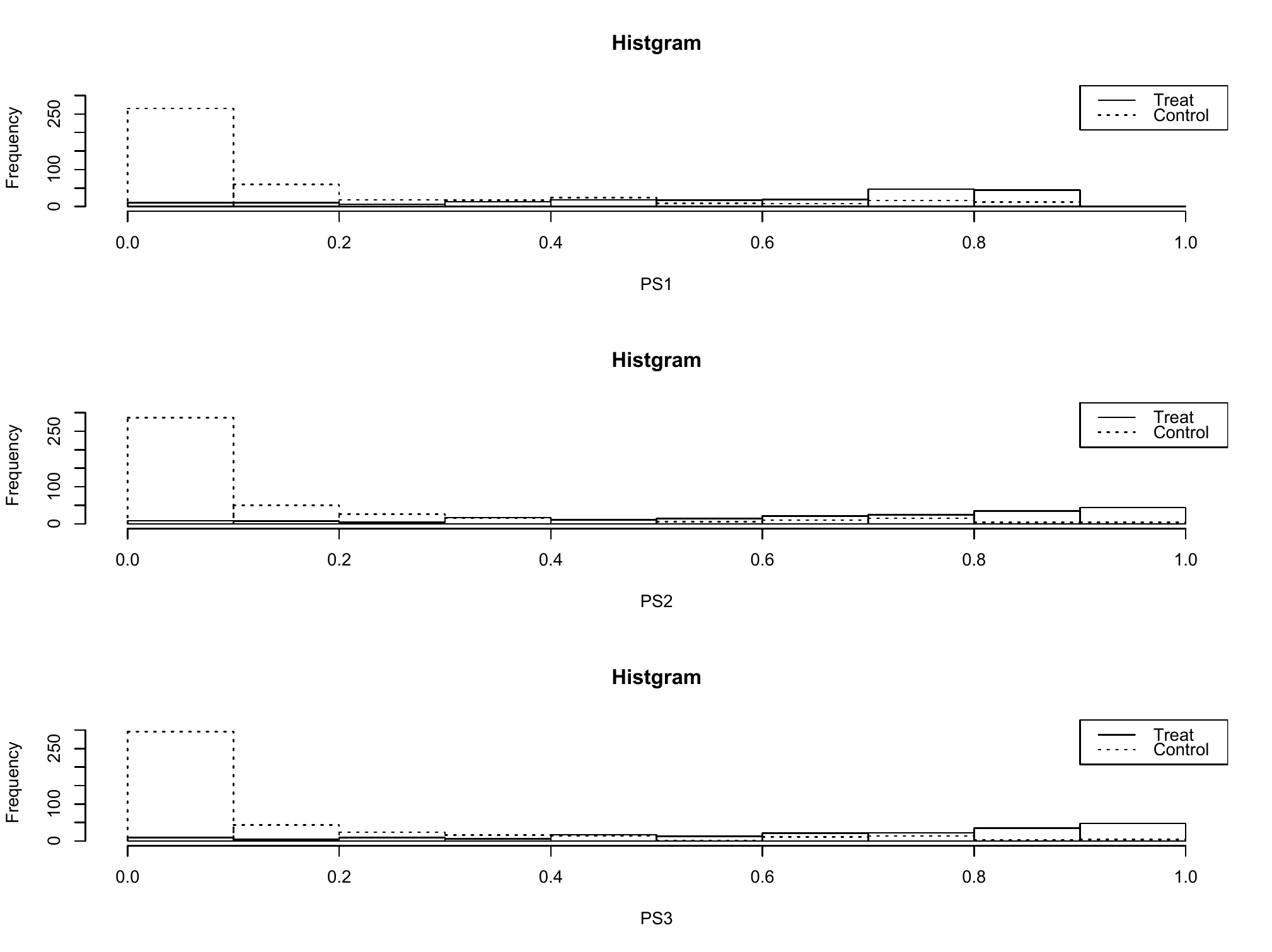}
\caption{\label{common2} Histograms of the estimated propensity scores (PS1, PS2 and PS3) between the treatment group and the control group.}
\end{figure}


\section{Discussion}
When the dimension of the proposed reduced covariates in either treatment group is greater than four, as Theorem \ref{thm:asym-bias} shows, it is desirable to further reduce the covariates to enhance the estimation accuracy of matching.
Following the discussion in Section $4$, we can develop sequential SDR by using the outcome and the treatment assignment alternatively as the response variable in dimension reduction. The procedure gives lower-dimensional covariates compared to the proposed approach, when certain sparsity structure exists between the treatment assignment and the covariates.  Alternatively, as the proposed reduced covariates are linear combinations of the original covariates, they do not need to be minimal sufficient statistics. Hence, instead of assuming additional sparsity structure in the data, one can also search for finer sufficient statistics in subsequent matching. This can be done in a data-adaptive manner, for example, by using genetic matching based on the proposed reduced covariates.

When the dimension of the covariates is relatively large compared to the sample size, it has been commonly recognized that the conventional SDR methods such as SIR lose effectiveness. In this case, we recommend to use sparse SDR, which constructs the central subspaces on the active set of covariates rather than the original covariates.
For more detail, see \cite{chen2010}.



\bibliographystyle{biom}
\bibliography{myref}

\begin{thebibliography}{}

\bibitem[\protect\citeauthoryear{Abadie and Imbens}{Abadie and
  Imbens}{2006}]{abadie2006}
Abadie, A. and Imbens, G.~W. (2006).
\newblock Large sample properties of matching estimators for average treatment
  effects.
\newblock {\em Econometrica} {\bf 74,} 235--267.

\bibitem[\protect\citeauthoryear{Chen, Zou, and Cook}{Chen
  et~al.}{2010}]{chen2010}
Chen, X., Zou, C., and Cook, R.~D. (2010).
\newblock Coordinate-independent sparse sufficient dimension reduction and
  variable selection.
\newblock {\em The Annals of Statistics} {\bf 38,} 3696--3723.

\bibitem[\protect\citeauthoryear{Cook}{Cook}{1998}]{cook1998}
Cook, R.~D. (1998).
\newblock {\em Regression Graphics}.
\newblock Wiley, New York.

\bibitem[\protect\citeauthoryear{Cook and Weisberg}{Cook and
  Weisberg}{1991}]{cook1991}
Cook, R.~D. and Weisberg, S. (1991).
\newblock Discussion of ``{S}liced inverse regression for dimension reduction".
\newblock {\em Journal of the American Statistical Association} {\bf 86,}
  316--342.

\bibitem[\protect\citeauthoryear{de~Luna, Waernbaum, and Richardson}{de~Luna
  et~al.}{2011}]{de2011}
de~Luna, X., Waernbaum, I., and Richardson, T.~S. (2011).
\newblock Covariate selection for the nonparametric estimation of an average
  treatment effect.
\newblock {\em Biometrika} {\bf 98,} 861--875.

\bibitem[\protect\citeauthoryear{Dehejia and Wahba}{Dehejia and
  Wahba}{1999}]{dehejia1999}
Dehejia, R.~H. and Wahba, S. (1999).
\newblock Causal effects in nonexperimental studies: Reevaluating the
  evaluation of training programs.
\newblock {\em Journal of the American Statistical Association} {\bf 94,}
  1053--1062.

\bibitem[\protect\citeauthoryear{Diamond and Sekhon}{Diamond and
  Sekhon}{2013}]{diamond2013}
Diamond, A. and Sekhon, J. (2013).
\newblock Genetic matching for estimating causal effects: A general
  multivariate matching method for achieving balance in observational studies.
\newblock {\em Review of Economics and Statistics} {\bf 95,} 932--945.

\bibitem[\protect\citeauthoryear{Hall and Li}{Hall and Li}{1993}]{hall1993}
Hall, P. and Li, K.-C. (1993).
\newblock On almost linearity of low dimensional projections from high
  dimensional data.
\newblock {\em The Annals of Statistics} pages 867--889.

\bibitem[\protect\citeauthoryear{Hansen}{Hansen}{2004}]{hansen2004}
Hansen, B.~B. (2004).
\newblock Full matching in an observational study of coaching for the sat.
\newblock {\em Journal of the American Statistical Association} {\bf 99,}
  609--618.

\bibitem[\protect\citeauthoryear{Hansen}{Hansen}{2008}]{hansen2008}
Hansen, B.~B. (2008).
\newblock The prognostic analogue of the propensity score.
\newblock {\em Biometrika} {\bf 95,} 481--488.

\bibitem[\protect\citeauthoryear{Heckman, Ichimura, and Todd}{Heckman
  et~al.}{1997}]{heckman1997}
Heckman, J.~J., Ichimura, H., and Todd, P.~E. (1997).
\newblock Matching as an econometric evaluation estimator: evidence from
  evaluating a job training programme.
\newblock {\em The Review of Economic Studies} {\bf 64,} 605--654.

\bibitem[\protect\citeauthoryear{Imai and Ratkovic}{Imai and
  Ratkovic}{2014}]{imai2014}
Imai, K. and Ratkovic, M. (2014).
\newblock Covariate balancing propensity score.
\newblock {\em Journal of the Royal Statistical Society: Series B (Statistical
  Methodology)} {\bf 76,} 243--263.

\bibitem[\protect\citeauthoryear{Imbens and Rubin}{Imbens and
  Rubin}{2015}]{imbens2015}
Imbens, G.~W. and Rubin, D.~B. (2015).
\newblock {\em Causal inference in statistics, social, and biomedical
  sciences}.
\newblock Cambridge University Press.

\bibitem[\protect\citeauthoryear{LaLonde}{LaLonde}{1986}]{lalonde1986}
LaLonde, R.~J. (1986).
\newblock Evaluating the econometric evaluations of training programs with
  experimental data.
\newblock {\em The American Economic Review} pages 604--620.

\bibitem[\protect\citeauthoryear{Lee, Lessler, and Stuart}{Lee
  et~al.}{2010}]{lee2010}
Lee, B.~K., Lessler, J., and Stuart, E.~A. (2010).
\newblock Improving propensity score weighting using machine learning.
\newblock {\em Statistics in Medicine} {\bf 29,} 337--346.

\bibitem[\protect\citeauthoryear{Li and Wang}{Li and Wang}{2007}]{li2007}
Li, B. and Wang, S. (2007).
\newblock On directional regression for dimension reduction.
\newblock {\em Journal of the American Statistical Association} {\bf 35,}
  2143--2172.

\bibitem[\protect\citeauthoryear{Li}{Li}{1991}]{li1991}
Li, K.-C. (1991).
\newblock Sliced inverse regression for dimension reduction (with discussion).
\newblock {\em Journal of the American Statistical Association} {\bf 86,}
  316--342.

\bibitem[\protect\citeauthoryear{Li and Duan}{Li and Duan}{1989}]{li1989}
Li, K.-C. and Duan, N. (1989).
\newblock Regression analysis under link violation.
\newblock {\em The Annals of Statistics} {\bf 17,} 1009--1052.

\bibitem[\protect\citeauthoryear{Lunceford and Davidian}{Lunceford and
  Davidian}{2004}]{lunceford2004}
Lunceford, J.~K. and Davidian, M. (2004).
\newblock Stratification and weighting via the propensity score in estimation
  of causal treatment effects: a comparative study.
\newblock {\em Statistics in Medicine} {\bf 23,} 2937--2960.

\bibitem[\protect\citeauthoryear{Luo, Zhu, and Ghosh}{Luo
  et~al.}{2016}]{luo2016}
Luo, W., Zhu, Y., and Ghosh, D. (2016).
\newblock On estimating regression-based causal effects using sufficient
  dimension reduction.
\newblock {\em Biometrika} , accepted.

\bibitem[\protect\citeauthoryear{Rosenbaum}{Rosenbaum}{1991}]{rosenbaum1991}
Rosenbaum, P.~R. (1991).
\newblock A characterization of optimal designs for observational studies.
\newblock {\em Journal of the Royal Statistical Society. Series B
  (Methodological)} {\bf 53,} 597--610.

\bibitem[\protect\citeauthoryear{Rosenbaum and Rubin}{Rosenbaum and
  Rubin}{1983}]{rosenbaum1983}
Rosenbaum, P.~R. and Rubin, D.~B. (1983).
\newblock The central role of the propensity score in observational studies for
  causal effects.
\newblock {\em Biometrika} {\bf 70,} 41--55.

\bibitem[\protect\citeauthoryear{Rosenbaum and Rubin}{Rosenbaum and
  Rubin}{1985}]{rosenbaum1985}
Rosenbaum, P.~R. and Rubin, D.~B. (1985).
\newblock Constructing a control group using multivariate matched sampling
  methods that incorporate the propensity score.
\newblock {\em American Statistician} {\bf 39,} 33--38.

\bibitem[\protect\citeauthoryear{Rubin and Thomas}{Rubin and
  Thomas}{1992}]{rubin1992}
Rubin, D. and Thomas, N. (1992).
\newblock Affinely invariant matching methods with ellipsoidal distributions.
\newblock {\em The Annals of Statistics} {\bf 20,} 1079--1093.

\bibitem[\protect\citeauthoryear{Rubin and Thomas}{Rubin and
  Thomas}{1996}]{rubin1996}
Rubin, D. and Thomas, N. (1996).
\newblock Matching using estimated propensity scores: relating theory to
  practice.
\newblock {\em Biometrics} {\bf 52,} 249--264.

\bibitem[\protect\citeauthoryear{Rubin}{Rubin}{1974}]{rubin1974}
Rubin, D.~B. (1974).
\newblock Estimating causal effects of treatments in randomized and
  nonrandomized studies.
\newblock {\em Journal of Educational Psychology} {\bf 66,} 688--701.

\bibitem[\protect\citeauthoryear{Splawa-Neyman, Dabrowska, and
  Speed}{Splawa-Neyman et~al.}{1990}]{neyman1923}
Splawa-Neyman, J., Dabrowska, D., and Speed, T. (1990).
\newblock On the application of probability theory to agricultural experiments.
  essay on principles. section 9.
\newblock {\em Statistical Science} {\bf 5,} 465--472.

\end{thebibliography}

\end{document}